\newtheorem{thm}{Theorem}
\def\sym#1{\ifmmode^{#1}\else\(^{#1}\)\fi}
\newcommand\primitiveinput[1]
\newcommand*\ExpandableInput[1]{\@@input#1 }
\begin{document}
\title{Fuzzy Difference-in-Discontinuities: Identification Theory and Application to the Affordable Care Act\thanks{An early version of this paper circulated under the title ``Does Obamacare Care? A Fuzzy Difference-in-Discontinuities Approach". The authors would like to thank seminar participants at McMaster University, University of Toronto, University of Sherbrooke, University of Bristol and University of Kent. Comments form Ismael Mourifie, Jeffrey Racine, Christian Gourieroux, Frank Windmeijer, Youngki Shin, Arthur Sweetman, Sarah Smith, Emmanuel Guerre, Miguel Le\'on-Ledesma and Zaki Wahhaj are gratefully acknowledged. All errors or omissions are ours.}}
\author{Hector Galindo-Silva\thanks{Department of Economics
} \\
Universidad Javeriana\\
\and Nibene Habib Som\'e\thanks{Department of Epidemiology and  Biostatistics
} \\
Western University \\
 \and Guy Tchuente \thanks{%
Corresponding author: School of Economics, University of Kent. E-mail: guytchuente@gmail.com. Address: Kennedy Building, Park Wood Road, Canterbury, Kent, CT2 7FS. Tel:+441227827249} \\
University of Kent
}
\date{ April 2021}
\maketitle
\abstract{\footnotesize{ \linespread{0.1}{This paper explores the use of a fuzzy regression discontinuity design where multiple treatments are applied at the threshold. The identification results show that, under the very strong assumption that the change in the probability of treatment at the cutoff is equal across treatments, a difference-in-discontinuities estimator identifies the treatment effect of interest. The point estimates of the treatment effect using a simple fuzzy difference-in-discontinuities design are biased if the change in the probability of a treatment applying at the cutoff differs across treatments. Modifications of the fuzzy difference-in-discontinuities approach that rely on milder assumptions are also proposed. Our results suggest caution is needed when applying before-and-after methods in the presence of fuzzy discontinuities. Using data from the National Health Interview Survey, we apply this new identification strategy to evaluate the causal effect of the Affordable Care Act (ACA) on older Americans' health care access and utilization.  \\
\textbf{Keywords:} Fuzzy Difference-in-Discontinuities, Identification, Regression Discontinuity Design, Affordable Care Act.\\
\textbf{JEL classification:} C13, I12, I13, I18. }}}

\newpage

\section{Introduction}

This paper proposes a method to identify and estimate the partial effect of the treatment of interest when multiple non-mutually-exclusive treatments have been assigned in a fuzzy manner at the same cutoff. We refer to this new approach as a ``fuzzy difference-in-discontinuities" design. This name follows \cite{grembifiscal} and \cite{EggersFreierGrembiNannicini2018}, who propose a ``difference-in-discontinuities'' approach that combines features of regression discontinuity (RD) and difference-in-differences designs. As we will describe below, our methodology generalizes \cite{grembifiscal} and \cite{EggersFreierGrembiNannicini2018}'s results. Our econometric problem can be viewed as a specific case of a general question: how to evaluate the pure effect of a policy intervention in the presence of confounding interventions. The use of non-mutually-exclusive treatments relates our work to the literature on competing risks in survival analysis (see \cite{fine1999proportional} for a description of competing risk models). In survival analysis, a life may end due to one of many risks, similar to how in policy analysis, an outcome can be caused by the policy of interest or a confounding factor. The main difficulty in policy analysis is that the treatment decision is usually endogenous, while the treatment's effects are heterogenous. How confounding policies affect other policy evaluation methods, such as problem structuring methods, difference-in-differences designs, synthetic control matching or instrumental variables, is left for future research. We use this method to identify the causal effect of the Affordable Care Act (ACA) on health care access and utilization for seniors at age 65.\footnote{As we describe below, our empirical application essentially combines \cite{card2008impact}'s fuzzy regression discontinuity design with the difference-in-differences design.}

Our ``fuzzy difference-in-discontinuities" method requires panel data or a pooled cross-sectional sample of the population, where at least one age group is eligible for treatment by all of the policies, while others are eligible for all but the policy of interest. Our identification results show that, under the assumption that the change in the probability of a treatment applying at the cutoff is equal across treatments, a fuzzy difference-in-discontinuities regression identifies the treatment effect of interest. If the treatment probabilities are not equal, a point estimate of the treatment effect using the fuzzy difference-in-discontinuities is biased. For this scenario, we propose alternative estimands of the treatment effect under an alternative set of assumptions. Our identification results cover cases with and without selection at the cutoff and are widely applicable. In general, our results suggest caution is needed when applying before-and-after methods in presence of fuzzy discontinuities.

Our method builds on past findings related to regression discontinuities and the use of before-and-after methods.  We specify a set of conditions under which a fuzzy difference-in-discontinuities estimator identifies a local average treatment effect.  We propose identification results similar to those by \cite{hahn2001identification}, but we generalize them to multiple treatments.  \cite{grembifiscal} and \cite{EggersFreierGrembiNannicini2018} propose and implement a sharp difference-in-discontinuities estimator that exploits ``before-and-after" and discontinuous policy variations (See also \cite{Leonardi2013} and \cite{BenedettoPaolo2018}, who use a difference-in-discontinuities approach.)  We extend these works to the case of fuzzy discontinuities.\footnote{To our knowledge, \cite{Jackson2019} is the only study that has combined a difference-in-differences design with a fuzzy regression discontinuity design, but it does not develop new theory or make explicit the assumptions that underlie this kind of specification.} The potential outcomes framework enables us to clarify the conditions under which a particular treatment effect of interest can be identified when many treatments are applied.\footnote{In this respect, see \cite{gilraine2017multiple}, who estimates the effect of class size on student performance in a sharp discontinuity setup.} Our results show that fuzzy treatment assignment leads to very restrictive identification conditions, and therefore should not be ignored.

In the presence of selection on unobservables near the cutoff, a fuzzy RD design can be understood as identifying a local average treatment effect on the compliers. Therefore, this paper is related to the large and growing literature on instrumental variables estimation with multiple treatments (see for instance \citealt{kirkeboen2016field, kline2016evaluating, hull2018isolateing}).\footnote{Also see \cite{lee2018identifying}, who discuss the identification of a multivalued treatment effect in the presence of multidimensional unobserved heterogeneity.} In a context with mutually exclusive treatments and multiple instruments, \cite{kirkeboen2016field} establish a set of conditions for point identification. In settings where one instrument shifts two treatments or when there are multiple counterfactual treatments, \cite{kline2016evaluating} and \cite{hull2018isolateing} consider the use of covariate-instrument interactions as additional instruments. We complement this literature by assuming that the treatment options under consideration are not necessarily mutually exclusive and may not have additive effects on the outcome.

There is a vast and growing literature evaluating the ACA's effects. Some studies have looked at the effect of a specific aspect of the ACA (e.g. Medicaid expansion) on access to care in particular U.S. states (for instance, \citealt{sommers2016} and \citealt{Courtemanche_ambulance}). \cite{sommers2016} use data from Kentucky, Arkansas and Texas, and a difference-in-differences specification, to assess changes in access to care among low-income adults after two years of ACA implementation. They find that Kentucky's Medicaid program and Arkansas's private option were associated with significant increases in access to primary care among low-income adults.  \cite{Courtemanche_ambulance} confirm that the ACA increased health insurance coverage in states that expanded Medicaid, and also look at the ability of health care service providers to meet demand. Importantly, they find that ambulance response times increased substantially with the implementation of the ACA, which is consistent with a supply-adjustment cost coming from an increase in demand. The coverage gains from the ACA's implementation are well documented. For example, \cite{Cohen2015} show that the ACA has reduced the uninsured rate from 16.0\% in 2010 to 9.1\% in 2015. However, relatively little is known about the effects of the ACA on access to and utilization of health care, despite the fact that the expansion of health insurance coverage was expected to increase the ability of a large proportion of the population to pay for health care services. In a recent review, \cite{Laxmaiah2017} find that access to care seems to have diminished under the ACA.  This paper provides new evidence on how the ACA affects older Americans' utilization of health care services. Our findings suggest that the ACA exacerbated cost barriers to health care for seniors. In 2014 (relative to 2012), more 65-year-olds delayed care due to costs (an increase of 3.6\%),  could not afford to pay for prescription drugs (an increase of 7.0\%),  could not afford to see a specialist (an increase of 7.2\%), and could not have a follow-up treatment (an increase of 5.5\%). Interestingly, the effects of the ACA are heterogenous across ethnicities and education levels.

The remainder of the paper is organized as follows. Section 2 develops our fuzzy difference-in-discontinuities estimator. Section 3 contains the empirical application.  Section 4 discusses the results and concludes.



\section{Theory: Identifying and Estimating a Policy Effect in the Presence of Confounding Policies}

Consider a population of $N$ individuals, each born in one of $G$ years. Let $Y_{ic}$ be an outcome (e.g. a health-related indicator),  where $i=1,...,N$ indexes the individuals, and $c=1,...,G$ indexes the yeas.  Define $O_{ic}$ as an indicator variable that identifies whether individual $i$ born in year $c$ is affected  by the policy of interest.

Before the introduction of $O_{ic}$, another policy was in place. Let  $M_{ic}$ be  an indicator variable that identifies whether individual $i$ born in year $c$ participated in this original policy.  The selection of participants in  $M_{ic}$  is partially determined by a forcing variable $X_{ic}$, and changes discontinuously at the cutoff or threshold $t$. Specifically, we say that an individual $i$ born in year  $c$ is treated --- with a higher probability --- when $X_{ic}>t$. The fact that the treatment status is partially determined by a forcing variable $X_{ic}$ means that individuals for whom $X_{ic}<t$ may also be treated by the policy. In this sense, program participation is fuzzy.  The selection of participants in  $O_{ic}$  is only partially determined by  $X_{ic}$ and $t$; it also depends on the year of birth of individual $i$. In this respect, we distinguish between two types of individuals, young and old, denoted by $L$ and $\bar{L}$ respectively, and say that individual $i$ is treated by $O_{ic}$ only if they belong to the younger group, $c \in L$.

Even though we focus on a fuzzy setting, it is useful to describe this assignment mechanism when $O_{ic}$ and $M_{ic}$ are deterministic functions of the running variables. In this case,
\begin{equation}
\label{Oicdef}
  O_{ic}=\left\{ \begin{array}{c}
                1\hspace{0.2 cm} if \hspace{0.2 cm} X_{ic}>t \hspace{0.2 cm}and \hspace{0.2 cm} c\in L \hspace{0.2 cm}\\
                0 \hspace{2 cm} \hspace{0.2 cm} otherwise
            \end{array}\right.
\end{equation}
and
\begin{equation}
\label{Micdef}
  M_{ic}=\left\{ \begin{array}{c}
                1\hspace{0.2 cm} if \hspace{0.2 cm} X_{ic}>t \hspace{0.2 cm}\\
                0 \hspace{0.2 cm} \hspace{0.2 cm} otherwise
            \end{array}\right.
\end{equation}

\bigskip

We define $Y_{ic}(o,m)$ as the potential outcome for individual $i$ born in year $c$ if  $O_{ic}=o$ and $M_{ic}=m$, where $m $, $o$ $\in \{ 0, 1\}$, with 1 corresponding to the individual being treated and 0 otherwise. By (\ref{Oicdef}) and (\ref{Micdef}), the observed outcome is  equal to
\begin{eqnarray}\label{out}
\nonumber
   Y_{ic}&=& O_{ic}M_{ic}Y_{ic}(1,1)+ O_{ic}(1-M_{ic})Y_{ic}(1,0)\\
   &+& (1-O_{ic})M_{ic}Y_{ic}(0,1)+(1-O_{ic})(1-M_{ic})Y_{ic}(0,0)
\end{eqnarray}

Our aim is to identify the causal effect of $O_{ic}$ on $Y_{ic}$. We focus on the average treatment effect of $O_{ic}$ at $t$ for $c \in L$, which we denote by $ATE_O(t)$,  and which we define as
\begin{equation}
\label{LATE}
ATE_O(t)=E(Y_{ic}(1,1)-Y_{ic}(0,1)| X_{ic}=t)
\end{equation}

If $O_{ic}$ is the only treatment   using the cutoff $t$, the cross-sectional  regression discontinuity  estimand would identify the average treatment effect of $O_{ic}$ at $t$. However, in our setting, this estimator will lead to a biased estimate of $ATE_O(t)$ because of the difficulty of separating the effect of $O_{ic}$ from the effect of $M_{ic}$.

Let us define $ATE(t)$ as the cross-sectional  fuzzy regression discontinuity estimand and let $ATE_M(t)$ be the fuzzy regression discontinuity estimand of the effect of $M_{ic}$ without application of $O_{ic}$. In the case of a sharp discontinuity, \cite{grembifiscal} and \cite{EggersFreierGrembiNannicini2018} show that $ATE_O(t)$ can be identified using what they call a difference-in-discontinuities estimand. Specifically, they show that $ATE_O(t)=ATE(t)-ATE_M(t)$. However, we show that, in a fuzzy scenario, this result often does not hold without additional assumptions. As described by \cite{lee2010regression}, in many settings of economic interest, the cutoff only partly determines the treatment status. It is, therefore, possible that the change in the probability of participation differs over time and across different policies.

In the following section, we investigate assumptions under which the difference in the fuzzy discontinuities identifies a policy-relevant quantity when multiple treatments are applied at the same cutoff.  Our theoretical framework follows \cite{hahn2001identification}'s model, extending it to multiple treatments using panel or pooled cross-sectional data. The theoretical discussion on identification considers, as a natural departure point from \cite{grembifiscal}'s and \cite{EggersFreierGrembiNannicini2018}'s difference-in-discontinuities estimators, a fuzzy counterpart. We first assume that there is no selection on unobservables, but include the possibility of heterogeneous treatments. This allows us to focus on the importance of the changes in the proportion of individuals affected by the treatment at the cutoff.  We further relax the assumption of no selection on unobservables, which might be more realistic. By allowing for selection on unobservables, the causal parameter of interest becomes a local average treatment effect and, as previously mentioned, the results are related to recent developments in the estimation with instrumental variables with multiple alternatives.


\subsection{Fuzzy Difference-in-Discontinuities: Identification}

Let  $Z_{ic}$ be a random variable, and define the limits $Z^+$,  $Z^-$ and $Z$ as $Z^+=\lim_{x\rightarrow t^+}E[Z_{ic}|X_{ic}=x]$,  $Z^-=\lim_{x\rightarrow t^-}E[Z_{ic}|X_{ic}=x]$, and $Z=\lim_{x\rightarrow t}E[Z_{ic}|X_{ic}=x]$. For any $Z_{ic}$, also define $\tilde{Z}_{ic}= 1\{c\in L\} Z_{ic}$ and  $\bar{Z}_{ic}= 1\{c\in \bar{L}\} Z_{ic}$.

\medskip

To identify the marginal causal effect of $O_{ic}$, we consider the following estimand
\begin{equation}\label{fuzzdiff-disc}
\tau_{O}^{FRD}=\frac{\tilde{Y}^+-\tilde{Y}^-}{\tilde{T}^+-\tilde{T}^-}- \frac{\bar{Y}^+-\bar{Y}^-}{\bar{M}^+-\bar{M}^-}
\end{equation}
where $T_{ic}= O_{ic}M_{ic}$.

\medskip
We call  $\tau_{O}^{FRD}$  in (\ref{fuzzdiff-disc}) a ``fuzzy difference-in-discontinuities'' estimator because, like \cite{grembifiscal}'s and \cite{EggersFreierGrembiNannicini2018}'s estimators, it rests on the intuition of combining difference-in-differences and RD strategies, but in our setting, the RD design is fuzzy. The choice of this estimand is motivated as a simple natural extension of \cite{grembifiscal}'s estimand.

\medskip

In this section, we provide a set of assumptions under which $\tau_{O}^{FRD}$, as defined in (\ref{fuzzdiff-disc}), identifies the $ATE_O(t)$ in (\ref{LATE}). All the assumptions will be conditional on $X_{ic}$  being in the neighborhood of the cutoff $t$.

\bigskip

\textbf{Assumption 1.} The conditional expectation of  each potential outcome  is continuous in $x$  at $t$, i.e., $E[Y_{ic}(o, m)|X_{ic}=x]$ is continuous in $x$ for all $c$ and all $o, m \in \{ 0,1\}$.

\medskip

This first assumption is standard in the RD literature, and states that the conditional expectation of  all potential outcomes is continuous at the cutoff point.

\bigskip

\textbf{Assumption 2.} $M_{ic}$ and $O_{ic}$ are independent of $Y_{ic}(o,m)$, where  $o,m=0,1$.

\medskip

This second assumption states that the determination of whether an individual is subject to the treatment is independent of the potential outcomes near the cutoff, i.e. individuals cannot self-select into the treatment based on their expected benefits. This assumption will be relaxed later to allow for some self-selection. As we are departing from \cite{grembifiscal}'s estimand, this assumption is a natural first step.

\bigskip

\textbf{Assumption 3.} The effect of the confounding policy $M_{ic}$ when there is no treatment
($O_{ic}=0$) is constant across year of birth: $Y_{c_1} (0, 1) - Y_{c_1} (0, 0) = Y_{c_2}(0, 1) - Y_{c_2}(0, 0)$ for any $c_2 \in L$ and $c_1 \in \bar{L}=C\smallsetminus L$, where $C$ is the set of all birth years in the sample.

\medskip

In Assumption 3, the confounding policy must have the same effect before and after the treatment of interest. This assumption can be tested by investigating the treatment effect of several consecutive periods with only one treatment at the cutoff or by comparing the groups that did not receive the treatment before and after the treatment period.

\bigskip

\textbf{Assumption 4.} (i) The limits  $O^+=\lim_{x\rightarrow t^+}E[O_{ic}|X_{ic}=x]$, $O^-=\lim_{x\rightarrow t^-}E[O_{ic}|X_{ic}=x]$,  $M^+=\lim_{x\rightarrow t^+}E[M_{ic}|X_{ic}=x]$, $M^-=\lim_{x\rightarrow t^-}E[M_{ic}|X_{ic}=x]$, $T^+=\lim_{x\rightarrow t^+}E[T_{ic}|X_{ic}=x]$ and $T^-=\lim_{x\rightarrow t^-}E[T_{ic}|X_{ic}=x]$ exist\\ (ii) $O^+\neq O^{-}$, $M^+\neq M^{-}$  and $T^+\neq T^-$.

\medskip

Assumptions 4 (i) and 4 (ii)  are standard RD assumptions for the two policies.

\bigskip

\textbf{Assumption 5.} The discontinuity in the probability of the treatment applying is the same for all policies at the threshold, i.e.
$\tilde{O}^+- \tilde{O}^-=\tilde{T}^+-\tilde{T}^-=\tilde{M}^+ -\tilde{M}^{-}.$

\bigskip

Assumption 5 is new, and is one of the contributions of this paper. It requires that the discontinuity in the probability of selection of each policy be the same as well as the joint probability of selection. This assumption is clearly satisfied when the discontinuity is sharp.

\medskip
The following theorem gives conditions for the identification of the treatment of interest.

\begin{thm}(Identification of the fuzzy difference-in-discontinuities estimator): If Assumptions 1 to 5 hold, then the fuzzy difference-in-discontinuities estimator $\tau_{O}^{FRD}$ defined in (\ref{fuzzdiff-disc}) identifies the average treatment effect, $ATE_O(t)$, in (\ref{LATE}).
\end{thm}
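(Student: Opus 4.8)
The plan is to substitute the potential-outcomes representation of the observed outcome, equation~(\ref{out}), into each of the two fuzzy regression-discontinuity ratios that make up $\tau_{O}^{FRD}$ in~(\ref{fuzzdiff-disc}), take the one-sided limits at $t$, and then use Assumptions 1--5 to cancel every nuisance term. A convenient first move is to rewrite~(\ref{out}) in a ``baseline plus interactions'' form,
\[
Y_{ic} = Y_{ic}(0,0) + O_{ic}\bigl[Y_{ic}(1,0)-Y_{ic}(0,0)\bigr] + M_{ic}\bigl[Y_{ic}(0,1)-Y_{ic}(0,0)\bigr] + O_{ic}M_{ic}\bigl[Y_{ic}(1,1)-Y_{ic}(1,0)-Y_{ic}(0,1)+Y_{ic}(0,0)\bigr],
\]
which one verifies by checking the four $(o,m)$ cases.

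Next I would treat the two terms of $\tau_{O}^{FRD}$ separately. For the second (``old cohort'') term, note that $c\in\bar{L}$ forces $O_{ic}=0$, hence $T_{ic}=0$, so the display above collapses to $Y_{ic}=Y_{ic}(0,0)+M_{ic}[Y_{ic}(0,1)-Y_{ic}(0,0)]$. Taking one-sided limits of $E[\bar{Z}_{ic}\mid X_{ic}=x]$ and using Assumption 1, the $Y_{ic}(0,0)$ baseline is continuous and drops out of the difference; using Assumption 2 the remaining cross-moment factors, so that $\bar{Y}^{+}-\bar{Y}^{-}=(\bar{M}^{+}-\bar{M}^{-})\,E[Y_{ic}(0,1)-Y_{ic}(0,0)\mid X_{ic}=t,\,c\in\bar{L}]$ (the common cohort-share factor $P(c\in\bar{L}\mid X_{ic}=t)$ passes through both numerator and denominator). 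By Assumption 4, $\bar{M}^{+}\neq\bar{M}^{-}$, so the ratio equals $E[Y_{ic}(0,1)-Y_{ic}(0,0)\mid X_{ic}=t,\,c\in\bar{L}]$.

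For the first (``young cohort'') term, all four pieces of the display survive. Applying the same limit-and-factor argument to $E[\tilde{Z}_{ic}\mid X_{ic}=x]$ expresses $\tilde{Y}^{+}-\tilde{Y}^{-}$ as a linear combination of the jumps $\tilde{O}^{+}-\tilde{O}^{-}$, $\tilde{M}^{+}-\tilde{M}^{-}$ and $\tilde{T}^{+}-\tilde{T}^{-}$, with coefficients $E[Y_{ic}(1,0)-Y_{ic}(0,0)\mid X_{ic}=t,c\in L]$, $E[Y_{ic}(0,1)-Y_{ic}(0,0)\mid X_{ic}=t,c\in L]$ and $E[Y_{ic}(1,1)-Y_{ic}(1,0)-Y_{ic}(0,1)+Y_{ic}(0,0)\mid X_{ic}=t,c\in L]$ respectively. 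Now I invoke Assumption 5: the three jumps coincide, so dividing by $\tilde{T}^{+}-\tilde{T}^{-}\neq 0$ cancels the common jump and the three coefficients simply add, telescoping to $E[Y_{ic}(1,1)-Y_{ic}(0,0)\mid X_{ic}=t,\,c\in L]$. Subtracting the second term gives $\tau_{O}^{FRD}=E[Y_{ic}(1,1)-Y_{ic}(0,0)\mid X_{ic}=t,c\in L]-E[Y_{ic}(0,1)-Y_{ic}(0,0)\mid X_{ic}=t,c\in\bar{L}]$; Assumption 3 equates the last conditional expectation with its cohort-invariant value $E[Y_{ic}(0,1)-Y_{ic}(0,0)\mid X_{ic}=t,c\in L]$, and the difference collapses to $E[Y_{ic}(1,1)-Y_{ic}(0,1)\mid X_{ic}=t,c\in L]=ATE_{O}(t)$.

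The step I expect to be most delicate is the bookkeeping around the cohort indicators: one must verify that $P(c\in L\mid X_{ic}=x)$ and $P(c\in\bar{L}\mid X_{ic}=x)$ enter the numerator and denominator of each ratio as a common, continuous-at-$t$ factor, so that Assumption 5 --- stated in terms of the tilde-limits --- really does deliver equal jumps \emph{within} the young subsample and the cohort shares drop out entirely. One must also be careful that Assumption 2 is used as \emph{joint} independence of $(O_{ic},M_{ic})$ from the potential outcomes, since the argument requires the $O_{ic}M_{ic}$ cross-moment --- not merely the marginals --- to factor at the cutoff.
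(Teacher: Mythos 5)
Your proposal is correct and follows essentially the same route as the paper's proof: the same potential-outcomes expansion of the observed outcome, the same separate limit computations for the young and old cohorts, and the same use of Assumptions 1--5 to cancel the nuisance terms (you merely apply Assumption 5 before, rather than after, writing out the paper's intermediate bias expression, which is an algebraically equivalent rearrangement). Your extra care about the cohort-share factors $P(c\in L\mid X_{ic}=x)$ and about reading Assumption 2 as \emph{joint} independence of $(O_{ic},M_{ic})$ from the potential outcomes is a useful refinement that the paper leaves implicit, but it does not change the argument.
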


\begin{proof}[\textbf{Proof}]

From Assumptions 1 and 2, first note that
\begin{eqnarray}\nonumber
  \tilde{Y}^+-\tilde{Y}^- &=& \lim_{x\rightarrow t^+}E[\tilde{Y}_{ic}|X_{ic}=x]-\lim_{x\rightarrow t^-}E[\tilde{Y}_{ic}|X_{ic}=x] \\\nonumber
   &=& (\tilde{T}^+-\tilde{T}^-)(\tilde{Y}(1,1)-\tilde{Y}(0,1)) + (\tilde{O}^+-\tilde{O}^-)(\tilde{Y}(1,0)-\tilde{Y}(0,0))\\
   &+&  (\tilde{M}^+-\tilde{M}^-)(\tilde{Y}(0,1)-\tilde{Y}(0,0))-(\tilde{T}^+-\tilde{T}^-)(\tilde{Y}(1,0)-\tilde{Y}(0,0))
\end{eqnarray}
and
\begin{eqnarray}\label{mfp}
\nonumber
  \bar{Y}^+-\bar{Y}^- &=& \lim_{x\rightarrow t^+}E[\bar{Y}_{ic}|X_{ic}=x, ]-\lim_{x\rightarrow t^-}E[\bar{Y}_{ic}|X_{ic}=x] \\
   &=& (\bar{M}^+-\bar{M}^-)(\bar{Y}(0,1)-\bar{Y}(0,0))
\end{eqnarray}

Applying Assumption 3 to equation (\ref{mfp}) and dividing each of the previous equations by $\tilde{T}^+-\tilde{T}^-$ and $\bar{M}^+-\bar{M}^-$, we have
\begin{eqnarray}\nonumber
  \tau_{O}^{FRD} &=& \frac{\tilde{Y}^+-\tilde{Y}^-}{\tilde{T}^+-\tilde{T}^-} - \frac{\bar{Y}^+-\bar{Y}^-}{\bar{M}^+-\bar{M}^-} \\ \label{idenfuzzdiff-disc}
   &=& ATE_O(t)-[1-\frac{\tilde{O}^+-\tilde{O}^-}{\tilde{T}^+-\tilde{T}^-}](\tilde{Y}(1,0)-\tilde{Y}(0,0))\\ \nonumber
   &-&[1- \frac{\tilde{M}^+-\tilde{M}^-}{\tilde{T}^+-\tilde{T}^-}](Y(0,1)-Y(0,0))
\end{eqnarray}
Under Assumptions 1 to 5, the right-hand side of (\ref{idenfuzzdiff-disc}) becomes $ ATE_O(t)$. This means that the fuzzy difference-in-discontinuities estimator identifies the local causal effect of the treatment.
\end{proof}

Note that the proof of Theorem 1 consists of two steps: first, Assumptions 1 to 4 lead to the difference-in-discontinuity expression in equation (\ref{idenfuzzdiff-disc}); then, when Assumption 5 is applied, all the terms other than $ATE_O(t)$ are cancelled out.

\medskip

Theorem 1 provides conditions allowing us to identify the causal effect of the treatment of interest. Assumption 5, while being strong, is a testable assumption: the three terms to which Assumption 5 imposes a strict equality represent the discontinuities in program participation at the threshold. Theorem 1 can be viewed as a negative result because it shows that the simple extension of the \cite{grembifiscal} difference-in-discontinuities estimand to the fuzzy case identifies the treatment of interest only under very restrictive assumptions.

In empirical applications, Assumptions 1 to 4 (i) and (ii) can be easily satisfied. As previously mentioned, these assumptions are similar to those used in a standard RD design. However, the Assumption 5 double equality is a strong assumption. The following assumption relaxes it, and imposes a sort of inclusion of the confounding treatment in the treatment of interest.

\bigskip

\textbf{Assumption 4'.} (i) The limits  $O^+=\lim_{x\rightarrow t^+}E[O_{ic}|X_{ic}=x]$, $O^-=\lim_{x\rightarrow t^-}E[O_{ic}|X_{ic}=x]$,  $M^+=\lim_{x\rightarrow t^+}E[M_{ic}|X_{ic}=x]$  and $M^-=\lim_{x\rightarrow t^-}E[M_{ic}|X_{ic}=x]$ exist;\\ (ii) $O^+\neq O^{-}$ and $M^+\neq M^{-}$; and \\ (iii) It is almost certain that $O_{ic}\geq M_{ic}$ and $\tilde{O}^+- \tilde{O}^-=\tilde{M}^+ -\tilde{M}^{-}= \bar{M}^+-\bar{M}^-$.

\medskip

The following theorem gives an alternative set of conditions under which our fuzzy difference-in-discontinuities estimator identifies the treatment effect of interest.

\begin{thm}(Less restrictive identification of the fuzzy difference-in-discontinuities estimator): If Assumptions 1 to 3 and 4' hold, then the fuzzy difference-in-discontinuities estimator $\tau_{O}^{FRD}$ defined in (\ref{fuzzdiff-disc}) identifies the  average treatment effect, $ATE_O(t)$, in (\ref{LATE}).
\end{thm}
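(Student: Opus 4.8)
The plan is to recycle the intermediate identity (\ref{idenfuzzdiff-disc}) already established inside the proof of Theorem 1. The point is that the derivation of (\ref{idenfuzzdiff-disc}) used only Assumptions 1--3, the existence of the one-sided limits, the structural fact that $O_{ic}$ (hence $T_{ic}$) is identically zero on the old cohorts $\bar{L}$, and the non-vanishing of the two denominators $\tilde{T}^+-\tilde{T}^-$ and $\bar{M}^+-\bar{M}^-$ --- the sole role of Assumption 4 there being to supply this last non-vanishing, and Assumption 5 playing no role at all. So, granting Assumptions 1--3 and 4$'$(i)--(ii), I would first re-confirm that (\ref{idenfuzzdiff-disc}) holds (modulo the denominators, checked below), which reduces the theorem to showing that the two bracketed coefficients $[1-\tfrac{\tilde{O}^+-\tilde{O}^-}{\tilde{T}^+-\tilde{T}^-}]$ and $[1-\tfrac{\tilde{M}^+-\tilde{M}^-}{\tilde{T}^+-\tilde{T}^-}]$ both equal zero.

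The engine is the dominance clause in 4$'$(iii). Since $O_{ic},M_{ic}$ are $\{0,1\}$-valued and $O_{ic}\ge M_{ic}$ almost surely, we get $T_{ic}=O_{ic}M_{ic}=M_{ic}$ a.s., hence $\tilde{T}_{ic}=\tilde{M}_{ic}$ a.s.\ and therefore $\tilde{T}^+-\tilde{T}^-=\tilde{M}^+-\tilde{M}^-$. Feeding in the two equalities postulated in 4$'$(iii), namely $\tilde{O}^+-\tilde{O}^-=\tilde{M}^+-\tilde{M}^-=\bar{M}^+-\bar{M}^-$, yields a single common value
\[
\tilde{O}^+-\tilde{O}^-=\tilde{M}^+-\tilde{M}^-=\tilde{T}^+-\tilde{T}^-=\bar{M}^+-\bar{M}^-,
\]
and since $O_{ic}$ is carried by the young cohort we have $\tilde{O}^\pm=O^\pm$, so this common value equals $O^+-O^-\neq 0$ by 4$'$(ii). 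This simultaneously shows the denominators are non-zero and that each bracket above is $1-1=0$; substituting into (\ref{idenfuzzdiff-disc}) leaves exactly $\tau_{O}^{FRD}=ATE_O(t)$.

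I expect the one place genuinely needing care is the claim that Theorem 1's route to (\ref{idenfuzzdiff-disc}) goes through under 4$'$ rather than 4 --- i.e.\ that nowhere in expanding $\tilde{Y}^+-\tilde{Y}^-$ and $\bar{Y}^+-\bar{Y}^-$ and dividing was anything used beyond Assumptions 1--3, the structural zero of $O$ on $\bar{L}$, and non-zero denominators; I would state this explicitly rather than re-run the expansion. The only other item worth writing out, precisely because it is what replaces the strong triple equality of Assumption 5, is the trivial implication $O_{ic}\ge M_{ic}\Rightarrow O_{ic}M_{ic}=M_{ic}$ for indicators. Everything else is bookkeeping with one-sided limits, and I would not grind through it.
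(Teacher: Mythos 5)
Your proposal is correct and follows essentially the same route as the paper: the paper's own proof also hinges on the observation that $O_{ic}\geq M_{ic}$ forces $E[T_{ic}|X_{ic}=x]=E[M_{ic}|X_{ic}=x]$ (i.e.\ $T_{ic}=M_{ic}$ for indicators), so that together with the equalities in Assumption 4$'$(iii) the triple equality of Assumption 5 is recovered and Theorem 1's identity (\ref{idenfuzzdiff-disc}) collapses to $ATE_O(t)$. Your version merely makes explicit the bookkeeping (that the brackets in (\ref{idenfuzzdiff-disc}) vanish and the denominators are non-zero) that the paper leaves implicit by citing Theorem 1.
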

\begin{proof}[\textbf{Proof}]
Note that under Assumption 4 (iii), $E[M_{ic}|X_{ic}=x]=P(M_{ic}=1|X_{ic}=x)= P(M_{ic}=1, O_{ic}=1|X_{ic}=x)+P(M_{ic}=1, O_{ic}=0|X_{ic}=x)$.
Hence, $E[M_{ic}|X_{ic}=x]=P(M_{ic}=1|X_{ic}=x)= P(M_{ic}=1, O_{ic}=1|X_{ic}=x)$ given that $O_{ic}\geq M_{ic}$. Thus, $E[M_{ic}|X_{ic}=x]=P(M_{ic}=1|X_{ic}=x)= P(M_{ic}=1, O_{ic}=1|X_{ic}=x)=E[T_{ic}| X_{ic}=x]$. This implies that $\tilde{O}^+- \tilde{O}^-=\tilde{M}^+ -\tilde{M}^{-}$ is enough for Assumption 5 to be verified.

\end{proof}

The assumptions under which Theorem 2 holds are slightly less restrictive than those for Theorem 1.  Moreover, and importantly, the restrictions $O_{ic}\geq M_{ic}$  and $\tilde{O}^+- \tilde{O}^-=\tilde{M}^+ -\tilde{M}^{-}$  are empirically testable.  In our empirical application, these two relations together imply that the change in the participation probability in the treatments as a result of being older than 65 should be constant.  The strict equality in Assumption 4' (iii) is still very strong (even though it is less restrictive than Assumption 5), as it means that in the case of strict inclusion, the difference on both sides of the cutoff should be similar (i.e. that $\tilde{O}^+-\tilde{M}^+= \tilde{O}^--\tilde{M}^{-}$). If there is selection on unobservables, the assumption may not hold.  The following assumption provides another alternative to Assumption 5.

\bigskip

\textbf{Assumption 5'.} The  two non-mutually-exclusive treatments interact in an additive manner, i.e.  $\tilde{Y}(1,1)-\tilde{Y}(0,1)=\tilde{Y}(1,0)-\tilde{Y}(0,0)$.

\medskip

In Assumption 5', we also assume that the effect of the second treatment would have been the same with or without the confounding treatment. This assumption may not be empirically testable. Non-mutually-exclusive treatments could amplify or mitigate the effect of the treatment of interest. Nevertheless, using this untestable assumption, we are able to relax the equality assumption.

  \begin{thm}(Identification of the $ATE_O$ additive treatment): Under Assumptions 1 to 3, 4 (i), 4 (ii) and 5',
   \begin{enumerate}
     \item [a)] $\frac{\tilde{T}^+-\tilde{T}^-}{\tilde{O}^+-\tilde{O}^-}\left[\tau_{O}^{FRD}+ [1- \frac{\tilde{M}^+-\tilde{M}^-}{\tilde{T}^+-\tilde{T}^-}]ATE_M(t)\right]= \frac{\tilde{Y}^+-\tilde{Y}^-}{\tilde{O}^+-\tilde{O}^-} -\frac{\tilde{M}^+-\tilde{M}^-}{\tilde{O}^+-\tilde{O}^-}ATE_M(t)$ point identifies the $ATE_O(t)$; and
     \item [b)] If it is almost certain that $O_{ic}\geq M_{ic}$, the $ATE_O(t)$  is point identified by $\tau_{O}^{FRD}\frac{\tilde{M}^+-\tilde{M}^-}{\tilde{O}^+-\tilde{O}^-}$.
   \end{enumerate}

\end{thm}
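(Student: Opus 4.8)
The plan is to build directly on equation~(\ref{idenfuzzdiff-disc}), which was obtained in the proof of Theorem~1 using only Assumptions~1, 2, 4(i) and 4(ii), so it is still available here. First I would feed Assumption~5' into that identity: since $\tilde{Y}(1,1)-\tilde{Y}(0,1)=\tilde{Y}(1,0)-\tilde{Y}(0,0)$ and the left-hand contrast is exactly $ATE_O(t)$, the term $(\tilde{Y}(1,0)-\tilde{Y}(0,0))$ in~(\ref{idenfuzzdiff-disc}) may be replaced by $ATE_O(t)$. Next I would use Assumption~3 to pin down the remaining contrast: from the second fuzzy RD in~(\ref{mfp}), $ATE_M(t)=\bar{Y}(0,1)-\bar{Y}(0,0)=(\bar{Y}^+-\bar{Y}^-)/(\bar{M}^+-\bar{M}^-)$, and Assumption~3 forces $\tilde{Y}(0,1)-\tilde{Y}(0,0)$ to equal this same number. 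With these two substitutions, (\ref{idenfuzzdiff-disc}) becomes
\begin{equation*}
\tau_{O}^{FRD}=ATE_O(t)-\left[1-\tfrac{\tilde{O}^+-\tilde{O}^-}{\tilde{T}^+-\tilde{T}^-}\right]ATE_O(t)-\left[1-\tfrac{\tilde{M}^+-\tilde{M}^-}{\tilde{T}^+-\tilde{T}^-}\right]ATE_M(t).
\end{equation*}
Collecting the $ATE_O(t)$ terms and multiplying through by $(\tilde{T}^+-\tilde{T}^-)/(\tilde{O}^+-\tilde{O}^-)$ then produces the left-hand expression in part~(a) and shows it equals $ATE_O(t)$; the division is legitimate because $O_{ic}$ is supported on $c\in L$, so that Assumption~4(ii) delivers $\tilde{O}^+\neq\tilde{O}^-$.

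To reach the estimable right-hand expression in part~(a), I would go back one step further, to the decomposition of $\tilde{Y}^+-\tilde{Y}^-$ written out in the proof of Theorem~1, and apply Assumptions~5' and~3 to it directly. The two $(\tilde{T}^+-\tilde{T}^-)$ terms cancel, leaving $\tilde{Y}^+-\tilde{Y}^-=(\tilde{O}^+-\tilde{O}^-)\,ATE_O(t)+(\tilde{M}^+-\tilde{M}^-)\,ATE_M(t)$. Dividing by $\tilde{O}^+-\tilde{O}^-$ and solving for $ATE_O(t)$ yields $ATE_O(t)=\frac{\tilde{Y}^+-\tilde{Y}^-}{\tilde{O}^+-\tilde{O}^-}-\frac{\tilde{M}^+-\tilde{M}^-}{\tilde{O}^+-\tilde{O}^-}\,ATE_M(t)$, which is the second formula in part~(a); every quantity on the right is estimable, with $ATE_M(t)$ coming from the older ($\bar{L}$) cohorts, so this is genuine point identification.

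For part~(b) I would add the restriction $O_{ic}\geq M_{ic}$ almost surely and recycle the argument in the proof of Theorem~2: conditional on $X_{ic}=x$, $\{M_{ic}=1\}\subseteq\{O_{ic}=1\}$ forces $M_{ic}=O_{ic}M_{ic}=T_{ic}$ almost surely, hence $\tilde{M}^+-\tilde{M}^-=\tilde{T}^+-\tilde{T}^-$. Plugging this equality into the displayed form of $\tau_{O}^{FRD}$ annihilates the $ATE_M(t)$ term and collapses it to $\tau_{O}^{FRD}=\frac{\tilde{O}^+-\tilde{O}^-}{\tilde{M}^+-\tilde{M}^-}\,ATE_O(t)$, so $ATE_O(t)=\tau_{O}^{FRD}\,\frac{\tilde{M}^+-\tilde{M}^-}{\tilde{O}^+-\tilde{O}^-}$, exactly as stated.

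I expect the main difficulty to be bookkeeping rather than any new idea: one has to keep straight that the contrasts entering~(\ref{idenfuzzdiff-disc}) are the cohort-restricted (tilde) potential-outcome means, that it is Assumption~3 — not Assumption~5' — that licenses replacing $\tilde{Y}(0,1)-\tilde{Y}(0,0)$ by the Medicare effect $ATE_M(t)$ measured on the $\bar{L}$ cohorts, and that all the ratios written above are well defined (in particular $\tilde{O}^+\neq\tilde{O}^-$). Once these points are fixed, both identification statements drop out of the single linear identity~(\ref{idenfuzzdiff-disc}) by elementary algebra.
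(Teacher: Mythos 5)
Your proposal is correct and follows essentially the same route as the paper: it starts from the identity (\ref{idenfuzzdiff-disc1}) inherited from Theorem 1, substitutes $ATE_O(t)$ for $\tilde{Y}(1,0)-\tilde{Y}(0,0)$ via Assumption 5' and $ATE_M(t)$ for $\tilde{Y}(0,1)-\tilde{Y}(0,0)$ via Assumption 3, solves for $ATE_O(t)$ to obtain (\ref{ateobounds1b}), and then uses $O_{ic}\geq M_{ic}\Rightarrow T_{ic}=M_{ic}$ (as in Theorem 2) to kill the $ATE_M(t)$ term for part (b). Your explicit verification that the estimable right-hand expression $\frac{\tilde{Y}^+-\tilde{Y}^-}{\tilde{O}^+-\tilde{O}^-}-\frac{\tilde{M}^+-\tilde{M}^-}{\tilde{O}^+-\tilde{O}^-}ATE_M(t)$ also equals $ATE_O(t)$ is a small completeness gain over the paper's proof, which only derives the first expression, but it is the same argument.
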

\begin{proof}[\textbf{Proof}]
Theorem 3 shows an alternative way to point identify the treatment effect of interest using a transformation of the difference-in-discontinuities estimator.
As shown in the proof of Theorem 1, when Assumptions 1 to 4 (i) and (ii) are satisfied,
\begin{eqnarray}\label{idenfuzzdiff-disc1}
  \tau_{O}^{FRD} &=& ATE_O(t)-[1-\frac{\tilde{O}^+-\tilde{O}^-}{\tilde{T}^+-\tilde{T}^-}](\tilde{Y}(1,0)-\tilde{Y}(0,0))\\ \nonumber
   &-&[1- \frac{\tilde{M}^+-\tilde{M}^-}{\tilde{T}^+-\tilde{T}^-}](\tilde{Y}(0,1)-\tilde{Y}(0,0))
\end{eqnarray}
Under Assumption 5', i.e. that $O_{ic}$ would have the same treatment effect without $M_{ic}$, we can also say that
\begin{equation}
\label{ateobounds1b}
ATE_O(t)=\frac{\tilde{T}^+-\tilde{T}^-}{\tilde{O}^+-\tilde{O}^-}\left[\tau_{O}^{FRD}+ [1- \frac{\tilde{M}^+-\tilde{M}^-}{\tilde{T}^+-\tilde{T}^-}]ATE_M(t)\right]
\end{equation}

Note that as $O_{ic}\geq M_{ic}$ and $\tilde{Y}(1,1)-\tilde{Y}(0,1)=\tilde{Y}(1,0)-\tilde{Y}(0,0)$, we have that $[1- \frac{\tilde{M}^+-\tilde{M}^-}{\tilde{T}^+-\tilde{T}^-}]=0$, and the result follows from Equation(\ref{ateobounds1b}).
\end{proof}

\bigskip

So far, our set and point identification results have assumed that there was no selection based on potential outcomes (i.e. Assumption 2). The following assumption allows us to relax Assumption 2, generalizing our identification results to scenarios with selection on unobservables.

\bigskip

\textbf{Assumption 6.} (i) $(Y_{ic}(o,m)-Y_{ic}(o_1,m_1), O_{ic}(x))$ and $(Y_{ic}(o,m)-Y_{ic}(o_1,m_1), M_{ic}(x))$ are jointly independent of $X_{ic}$ near the cutoff $t$, with $m, m_1, o, o_1 \in \{0,1\}$  and  $ O_{ic}(x)$ and $M_{ic}(x)$ are treatment states given $X_{ic}=x$. \\ (ii) There exists an $\varepsilon>0$ such that $O_{ic}(t+e)\geq O_{ic}(t-e)$, $M_{ic}(t+e)\geq M_{ic}(t-e)$ and $T_{ic}(t+e)\geq T_{ic}(t-e)$ for all $0<e<\varepsilon$.\\ (iii) There exists an $\varepsilon>0$  such that if $e>0$ and is sufficiently small (i.e. $0<e<\varepsilon$),\\ $E[Y_{ic_1}(0,1)-Y_{ic_1}(0,0)| \{M_{ic_1}(t+e)- M_{ic_1}(t-e)=1\}]=E[Y_{ic_2}(0,1)-Y_{ic_2}(0,0)| \{M_{ic_2}(t+e)- M_{ic_2}(t-e)=1\}]$  for any $c_2 \in L$ and $c_1 \in \bar{L}=C\smallsetminus L$.

\medskip

Assumption 6 (i) means that the choice of the cutoff is exogenous. It allows for selection based on potential outcomes. Assumption 6 (ii)  is similar to the monotonicity assumption in the instrumental variables literature. Assumption 6 (iii) is the analogue of Assumption 3 when there is selection on unobservables.

\medskip

\begin{thm}(Local average treatment effect for the fuzzy difference-in-discontinuities model): Suppose that Assumptions 1, 4, 5 and 6 hold.   Then, $\tau_{O}^{FRD}$  identifies a local average treatment effect, i.e.,
\begin{equation}\label{Ladifdis}
   \tau_{O}^{FRD}= \lim_{e\rightarrow 0}E(Y_{ic}(1,1)-Y_{ic}(0,1)| \{O_{ic}(t+e)- O_{ic}(t-e)=1\},\{M_{ic}(t+e)- M_{ic}(t-e)=1\})
 \end{equation}
\end{thm}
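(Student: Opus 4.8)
The plan is to mimic the Hahn--Todd--Van der Klaauw (\citeauthor{hahn2001identification}) fuzzy-RD argument twice --- once for each cohort group --- and then difference the two expressions, exactly as in the proof of Theorem~1, but now keeping track of the selection-on-unobservables through the monotonicity and conditional-independence conditions in Assumption~6. First I would fix a small $e>0$ and compute the one-sided limits of $E[\tilde Y_{ic}\mid X_{ic}=x]$ as $x\to t^\pm$. Using Assumption~1 (continuity of potential-outcome means) together with Assumption~6(i) (joint independence of the potential-outcome contrasts and the treatment states from $X_{ic}$ near $t$), the jump $\tilde Y^+-\tilde Y^-$ decomposes into a sum over the sub-populations defined by how $(O_{ic}(t+e),O_{ic}(t-e))$ and $(M_{ic}(t+e),M_{ic}(t-e))$ move. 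Assumption~6(ii) (monotonicity) kills all the ``defier'' cells, so the only cells that survive are those in which $O$ and/or $M$ switch from $0$ to $1$. This is the standard Wald-type reduction: $\tilde Y^+-\tilde Y^-$ becomes a weighted sum of conditional average effects, with weights equal to the probabilities of each complier cell, which in turn are the jumps $\tilde T^+-\tilde T^-$, $\tilde O^+-\tilde O^-$, $\tilde M^+-\tilde M^-$ in the appropriate combinations --- structurally the same identity as the first display in the proof of Theorem~1, but with population averages replaced by averages over compliers.

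Next I would do the same for the older cohort $\bar L$. Since $O_{ic}=0$ for $c\in\bar L$ by construction, only $M$ can switch, so $\bar Y^+-\bar Y^-=(\bar M^+-\bar M^-)\,E[Y_{ic_1}(0,1)-Y_{ic_1}(0,0)\mid M_{ic_1}(t+e)-M_{ic_1}(t-e)=1]$, the analogue of equation~(\ref{mfp}). Dividing by $\bar M^+-\bar M^-$ isolates the local effect of $M$ on the $M$-compliers in the old cohort. Then Assumption~6(iii) --- the selection-robust version of Assumption~3 --- lets me replace that old-cohort $M$-complier effect with the corresponding young-cohort $M$-complier effect $E[Y_{ic_2}(0,1)-Y_{ic_2}(0,0)\mid M_{ic_2}(t+e)-M_{ic_2}(t-e)=1]$. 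After forming $\tau_O^{FRD}$ by subtracting the old-cohort ratio from the young-cohort ratio and dividing the young-cohort numerator by $\tilde T^+-\tilde T^-$, I obtain an expression of exactly the shape of~(\ref{idenfuzzdiff-disc}): the leading term is the $ATE_O$-type object restricted to the $T$-compliers, and there are two residual terms with coefficients $1-\frac{\tilde O^+-\tilde O^-}{\tilde T^+-\tilde T^-}$ and $1-\frac{\tilde M^+-\tilde M^-}{\tilde T^+-\tilde T^-}$. Finally, invoking Assumption~5 (equality of the three discontinuities, now interpreted as equality of the complier-population sizes under monotonicity) makes both coefficients vanish, leaving precisely the right-hand side of~(\ref{Ladifdis}) --- the average of $Y_{ic}(1,1)-Y_{ic}(0,1)$ over individuals who are simultaneously $O$-compliers and $M$-compliers at the cutoff, since in the limit $e\to0$ the $T$-complier event coincides with the intersection of the $O$-complier and $M$-complier events (here $T=OM$, and monotonicity forces $T$ to jump exactly when both $O$ and $M$ do, given $O\ge$ the relevant configuration).

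The delicate step, and the one I would write out most carefully, is the bookkeeping of the complier cells in the first decomposition and the claim that under Assumption~6(ii) the event $\{T_{ic}(t+e)-T_{ic}(t-e)=1\}$ coincides in the limit with $\{O_{ic}(t+e)-O_{ic}(t-e)=1\}\cap\{M_{ic}(t+e)-M_{ic}(t-e)=1\}$. Because $T=OM$ is a product, a naive count produces cross terms (e.g.\ $O$ switches but $M$ was already $1$, contributing to a $T$-switch without an $M$-switch), and it is Assumption~5's triple equality together with monotonicity that forces these cross terms to have zero mass, so that the weights collapse cleanly and the two residual coefficients are exactly zero rather than merely small. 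I would also need to be explicit that the conditional-independence in Assumption~6(i) is what licenses moving from $\lim_{x\to t^\pm}E[\cdot\mid X_{ic}=x]$ to unconditional complier averages, and that the $e\to0$ limit of each piece exists by Assumption~4 (existence of the one-sided limits of $O,M,T$). Everything else is the routine algebra already displayed in the proofs of Theorems~1 and~3, carried over verbatim with ``population mean'' replaced by ``complier mean.''
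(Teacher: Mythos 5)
Your proposal follows essentially the same route as the paper's proof: a Hahn-style complier decomposition of the outcome jump for the young cohort under Assumptions 6(i)--(ii), the single-term analogue for the old cohort, Assumption 6(iii) to equate the old- and young-cohort $M$-complier effects, division by the respective $T$ and $\bar M$ jumps, and Assumption 5 to cancel the two residual terms in the resulting analogue of (\ref{idenfuzzdiff-disc}). The only difference is that you make explicit the complier-cell bookkeeping behind the leading term (equating the $T$-complier event with the joint $O$/$M$-complier event as $e\rightarrow 0$), a step the paper passes over by direct appeal to the argument of \cite{hahn2001identification}.
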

\begin{proof}[\textbf{Proof}]
Let us consider the following quantity $A$, evaluated for $c \in L$:
\begin{eqnarray}\nonumber
A &=&  E[Y_{ic}|X_{ic}=t+e]-E[Y_{ic}|X_{ic}=t-e] \\\nonumber
&=& E[O_{ic}M_{ic}Y_{ic}(1,1)+ O_{ic}(1-M_{ic})Y_{ic}(1,0)|X_{ic}=t+e]\\\nonumber
&+&  E[(1-O_{ic})M_{ic}Y_{ic}(0,1)+(1-O_{ic})(1-M_{ic})Y_{ic}(0,0)|X_{ic}=t+e]\\\nonumber
&-&E[O_{ic}M_{ic}Y_{ic}(1,1)+ O_{ic}(1-M_{ic})Y_{ic}(1,0)|X_{ic}=t-e]\\
&-&E[(1-O_{ic})M_{ic}Y_{ic}(0,1)+(1-O_{ic})(1-M_{ic})Y_{ic}(0,0)|X_{ic}=t-e]
\end{eqnarray}
From the independence assumption (Assumption 6 (i)) and monotonicity assumption (Assumption 6 (ii)), which are similar to arguments in \cite{hahn2001identification}, Theorem 3, the last expression of $A$ is equivalent to
 \begin{eqnarray}
A &=&  E[Y_{ic}(1,1)-Y_{ic}(0,1)| \{O_{ic}(t+e)- O_{ic}(t-e)=1\}, \{M_{ic}(t+e)- M_{ic}(t-e)=1\}]\\\nonumber
 &\times& (E[T_{ic}|X_{ic}=t+e]-E[T_{ic}|X_{ic}=t-e])\\\nonumber
 &+& E[Y_{ic}(1,0)-Y_{ic}(0,0)| \{O_{ic}(t+e)- O_{ic}(t-e)=1\}](E[O_{ic}|X_{ic}=t+e]-E[O_{ic}|X_{ic}=t-e])\\\nonumber
 &+& E[Y_{ic}(0,1)-Y_{ic}(0,0)| \{M_{ic}(t+e)- M_{ic}(t-e)=1\}](E[M_{ic}|X_{ic}=t+e]-E[M_{ic}|X_{ic}=t-e])\\\nonumber
   &-&   E[Y_{ic}(1,0)-Y_{ic}(0,0)| \{O_{ic}(t+e)- O_{ic}(t-e)=1\}](E[T_{ic}|X_{ic}=t+e]-E[T_{ic}|X_{ic}=t-e])
\end{eqnarray}
Applying a similar argument to the older group, we also have that
\begin{eqnarray}\nonumber
  B &=& E[\bar{Y}_{ic}|X_{ic}=t+e]-E[\bar{Y}_{ic}|X_{ic}=t-e] \\\nonumber
   &=&  E[Y_{ic}(0,1)-Y_{ic}(0,0)| \{M_{ic}(t+e)- M_{ic}(t-e)=1\},c \in \bar{L}]\\
   &\times& E[M_{ic}|X_{ic}=t+e,c \in \bar{L})-E(M_{ic}|X_{ic}=t-e, c \in \bar{L}]
\end{eqnarray}
Under Assumptions 3 and 6, we have that $E[Y_{ic}(0,1)-Y_{ic}(0,0)| \{M_{ic}(t+e)- M_{ic}(t-e)=1\},c \in \bar{L}]=E[Y_{ic}(0,1)-Y_{ic}(0,0)| \{M_{ic}(t+e)- M_{ic}(t-e)=1\}]$ for all $0<e<\varepsilon$.  In addition, dividing $A$ by $E[T_{ic}|X_{ic}=t+e]-E[T_{ic}|X_{ic}=t-e]$ and $B$ by $E[M_{ic}|X_{ic}=t+e,c \in \bar{L})-E(M_{ic}|X_{ic}=t-e, c \in \bar{L}]$,  letting $e$ go to zero and applying Assumption 5, we obtain (\ref{Ladifdis}).
\end{proof}

It is important to note that under Assumptions 1, 4', and 6, it can be shown that the fuzzy difference-in-discontinuities model identifies the marginal local average treatment effect (LATE) of the policy or treatment of interest. Moreover, a transformation similar to that obtained in Theorem 2 will also point identify the LATE of the second treatment.

We have shown that a difference-in-discontinuities design can help separate the effects of a policy of interest from those of confounding treatments. We are interested in cases where the treatment is not mutually exclusive and may affect the outcome in a non-additive manner. Identification can be achieved even when there is selection at the threshold based on the potential benefits of a policy. Theorem 4 shows that our fuzzy difference-in-discontinuities estimator identifies the LATE at the discontinuity point.


The identification results presented in this section show conditions for point identification of the $ATE_O(t)$. We have shown that point identification can occur in two  scenarios. First,  the changes in the  treatment probability for both treatments as well as their joint probabilities are equal at the cutoff point. Alternatively, the joint probabilities of treatments might not be needed, as long as the pre-existing treatment is included in the treatment of interest when its application starts (this corresponds to empirical situations where the second treatment is a reinforcement of the existing one). Additionally, we can relax the assumption of equality of treatment probability changes at the cutoff point, replacing it with the assumption that treatment effects are additive  (i.e. that $Y(1,1)-Y(0,1)=Y(1,0)-Y(0,0)$). However, this assumption may not be testable.

In all these cases, point identification using a fuzzy difference-in-discontinuities approach relies on strong testable assumptions. In the case of strict inclusion of the pre-existing treatment in the treatment of interest, the assumption of equality of treatment probability changes at the cutoff point means that the difference should stay exactly the same above and below the cutoff. When the equality of treatment probability changes assumption is relaxed, an additivity assumption is required for point identification, ruling out the case of strict superadditivity  ($Y(1,1)-Y(0,1) > Y(1,0)-Y(0,0)$) or subadditivity ($Y(1,1)-Y(0,1)<Y(1,0)-Y(0,0)$), and the estimator used is not a direct difference-in-discontinuities estimator.


\subsection{Estimation and Inference}

The estimation and inference of the treatment effect of interest (i.e. of  $\tau_{O}^{FRD}$ in last section) can be done using a nonparametric approach. In this subsection, we present the steps of the nonparametric procedure. 

The estimation of the treatment effect of interest is obtained using a fuzzy difference-in-discontinuities design via a difference in two ratios. The theorems of the previous section show assumptions under which the difference in the ratios
\begin{equation}\label{parm}
 \tau_{O}^{FRD}= \frac{\tilde{Y}^+-\tilde{Y}^-}{\tilde{T}^+-\tilde{T}^-} - \frac{\bar{Y}^+-\bar{Y}^-}{\bar{M}^+-\bar{M}^-}
\end{equation}
identifies the treatment effect of the relevant policy at $X=t$. Therefore, to obtain a consistent estimator for $\tau_{F}^{FRD}$, we can use consistent estimators of   $\hat{\tilde{Y}}^+$, $\hat{\tilde{Y}}^-$, $ \hat{\tilde{T}}^+$, $\hat{\tilde{T}}^-$
$\hat{\bar{Y}}^+$, $\hat{\bar{Y}}^-$, $ \hat{\bar{M}}^+$ and $\hat{\bar{M}}^-$.

These quantities are commonly estimated using nonparametric regression techniques (see \cite{hahn2001identification}, and \cite{porter2003estimation}, \cite{otsu2015empirical}). The parameters  can be estimated by local linear regression estimators, which are optimal (see for instance  \citealt{porter2003estimation}) and have better boundary properties than traditional kernel regressions (for example, see \citealt{fan1992design}).

The estimator for $\tilde{Y}^+$ is given by a solution to the following weighted least squares problem, where $\hat{\tilde{Y}}^+=\hat{a}$:
\begin{equation}\label{LLRy+}
  (\hat{a},\hat{b})=argmin_{a,b}\sum_{i,c\in L :X_{ic}\geq t} (Y_{ic}-a-b(X_{ic}-t))^2 \mathbb{K}\left(\frac{X_{ic}-t}{h}\right)
\end{equation}
where $ \mathbb{K}$  is the kernel function and $h=h_N$ is the bandwidth satisfying $h\rightarrow 0$ as $N\rightarrow \infty$.

The other quantities included in the first ratio on the right of (\ref{parm}) are estimated using the same type of procedure as in (\ref{LLRy+}). Depending on the quantity we are interested in, $Y_{ic}$ is replaced by $T_{ic}$ or $M_{ic}$. The minimization is made on $X_{ic}\geq t$ or $X_{ic}\leq t$ to get the upper and lower limit estimators, respectively.  Note that in the estimation of this first ratio, we use individuals from the year of birth to which both policies are applied.

To obtain the treatment effect of our policy of interest, we need an estimate of the second ratio on the right side of (\ref{parm}). To estimate the terms comprising this second ratio, we follow a similar procedure to that applied to the elements of the first ratio, but with one difference: the sample now consists of those individuals in the age group to which only one policy (the confounding policy) is applied.
For instance, the estimator for $\bar{Y}^+$ solves the following weighted least
squares problems with respect to $a$, i.e. $\hat{\bar{Y}}^+=\hat{a}$:
\begin{equation}\label{LLRybar}
  (\hat{a},\hat{b})=argmin_{a,b}\sum_{i,c\in \bar{L}:X_{ic}\geq t} (Y_{ic}-a-b(X_{ic}-t))^2 \mathbb{K}\left(\frac{X_{ic}-t}{h}\right).
\end{equation}

The use of two independent samples to evaluate the two ratios ensures the independence of these two quantities. Following Theorem 4 of \cite{hahn2001identification}, the  asymptotic distribution of  the estimator is normally distributed, with its mean given by the difference in means of the two ratios, and the variance given by the sum of the variances. The speed of convergence is $n^{\frac{2}{5}}$, and $h=O_p(n^{-\frac{1}{5}})$ where $n=min(N_1, N_2)$ ($N_1$ is the number of individuals in $P$ and $N_2$ is the number of individuals in $\bar{L}$). The asymptotic results can be  established with a balanced sample in the two age groups. If the samples are not balanced, we can drop the excess randomly. The conventional Wald-type confidence set for $\tau_{O}^{FRD}$ can be obtained by estimating  asymptotic variances of the non-parametric estimator, or by using an appropriate bootstrap method. Another alternative may be to use the empirical likelihood-based inference methods proposed by \cite{otsu2015empirical}, which circumvent the asymptotic variance estimation issues and have data-determined shapes. However, the procedure needs to be extended to account for a potentially heteroscedastic panel data set.

This non-parametric approach is implemented by selecting a smoothing parameter, $h$. For a standard regression discontinuity design, this parameter can be optimally chosen using data-driven selection methods (see \citealt{imbens2012optimal} and \citealt{calonico2014robust}). In the case of a fuzzy discontinuity, \cite{imbens2012optimal} suggest proceeding as in \cite{imbens2008regression} by estimating two optimal bandwidths: one for the main regression outcome and a second for the treatment. To apply this recommendation to our case, we must select four optimal bandwidths. The selection of these bandwidths are theoretically based on homoscedasticity assumptions that may not hold for the pooled cross-section data we are using. While a set of bandwidths might be optimal in the sense of minimizing the integrated mean-squared error, its effect on inference is also of interest. Indeed, \cite{calonico2014robust} show that confidence intervals constructed using bandwidths that minimize the integrated  mean-squared errors are not valid. They propose new theory-based, more robust confidence interval estimators for average treatment effects. To our knowledge, no study has generalized this theory to difference-in-discontinuities settings. The generalization of this theory to these settings (sharp and fuzzy) is important and deserves a careful investigation, which we leave for future research.


\section{Empirical Application: Effect of the ACA}

Our identification theory shows how to tease out the effect of one specific policy when two treatments are applied at the cut-off.  In the case of two distinct policies (none of them is included in the other), Assumption 5 is key for identification. When one of the treatments is included in the other,  Assumption 4’  must hold instead of Assumption 5.   In this section, we applied our identification theory to analyze the effects of ACA on access to health care services.  Enrolling in Medicare before or after 2013 (the date of the application of the ACA) can be driven by factors that are unobservable to econometricians but known to the agent also, the ACA version of Medicare builds on the pre-existing Medicare.   Thus, after 2013 the Medicare offer to the elderly includes the pre-existing Medicare and the new benefits brought by the ACA.  These extra policies are those we would like to assess the effects.   We acknowledge that the lack of distinction between the enrolee of pre-existing Medicare and ACA Medicare version could be viewed as a limitation of this empirical application.   However, the fact that the 2014 Medicare is a combination of the pre-existing Medicare and the changes of ACA implies an inclusion, so Assumption 4' applies and enables the use of our identification theory to extract the net effects of the ACA.


\subsection{Institutional Background}

The ACA brought the most substantial changes to U.S. health care policy since the creation of Medicare and Medicaid in 1965.  These changes were intended to reduce Medicare costs, expand access to health care services, improve quality of care and expand drug coverage. Prior to the ACA, at age 65, people who had worked 40 quarters or more in covered employment were eligible for Medicare, and could also be eligible for Medicaid if their incomes were below a threshold. These eligibility criteria continue under the ACA, but the ACA is more generous for medium-income individuals and slightly more restrictive for high-income seniors.

Medicare (including the ACA's version of Medicare) has four parts. Part A, hospital insurance, provides broad coverage of inpatient expenses including hospital visits, care in skilled nursing facilities, hospice care and home health services. Coverage is free of charge. Part B, medical insurance, covers medical services including physician fees, nursing fees and preventative services. Enrollees pay a modest monthly premium. Part C, Medicare Advantage, is provided by private insurance; it covers the essentials of Part A and Part B benefits, plus urgent and emergency care services. Its monthly premiums vary widely across private insurers.\footnote{See \url{https://www.medicareresources.org/medicare-benefits/medicare-advantage/}.} Part D, prescription drug coverage, was enacted in 2003 to reduce costs, increase efficiency and improve access to prescription medications for seniors and disabled persons.

When the ACA was introduced in 2010, it came with some improvements/changes to Medicare. This included a gradual reduction in the cost of private insurance premiums (Part C): on average, the payment amount per enrollee decreased by about 6\% in 2014.  The ACA has reduced out-of-pocket expenses for medication of Medicare Part D beneficiaries from 100\% of the coverage gap to 50\% in 2011, making prescription drugs more affordable. Moreover, under the ACA, Medicare beneficiaries (of whom there were over 20 million in 2011) have access to free preventative care services. This includes mammograms, prostate cancer screenings, depression screenings, obesity screenings and counseling, diabetes screenings and screenings for heart disease. The ACA introduced an important modification to care providers' compensation systems under Medicare by moving away from a fee-for-service system to a capitation system with some quality requirements. For example, hospitals with high readmission rates now receive lower payments. Moreover, the new payment system includes financial incentives for care providers to report on different quality measures, including measures that account for the patient's experience.

The main ACA coverage provisions had taken effect by 2014 \citep{Obama2016}. Figure 1 in \cite{Obama2016}  shows that the percentage of individuals without insurance in the U.S. substantially dropped in 2014. This is consistent with the results in \cite{sommers2016} and \cite{Courtemanche_ambulance}.

Medicare after the ACA contains the main characteristics of Medicare before ACA, with some additional benefits and changes in the U.S. health care system. We use ``Medicare'' to refer to the pre-existing Medicare program, and consider the additional elements to be a different policy (ACA).

\subsection{Estimation Equations}

We use a linear model to estimate the effect of the ACA on the utilization of health care services by elderly Americans. In addition to sidestepping the theoretical limitations of the non-parametric approach, this specification enables us to compare our results with those of \cite{card2008impact}.

We restrict our attention to linear regression functions using observations distributed within a distance of $10$ years on both sides of the age 65 cutoff, before and after the implementation of the ACA. We also explore robustness to the inclusion of second-order polynomial terms of age along with interactions and the use of a smaller bandwidth. As discussed below, the estimated discontinuities are generally robust.

We aim to estimate the following model:
\begin{equation}
\label{empir_mod}
Y_{ic}= \alpha_1 +  \alpha_2 M_{ic}+ \alpha_3 O_{ic}+\alpha_4 D_c + \tau_{O}^{FRD} D_c  M_{ic} O_{ic}+f(X_{ic},D_c)+\eta_{ic}
\end{equation}
and $ M_{ic}=\tau_0+\tau_1 X_{ic}^*+\tau_2 D_c+\tau_3 D_c X_{ic}^*+f(X_{ic},D_c)+\varsigma_{ic}$ and $O_{ic}= \pi_0+\pi_1 X_{ic}^*+\pi_2 D_c+\pi_3 D_c X^*_{ic}+f(X_{ic},D_c)+\upsilon_{ic}$, where $X_{ic}$  is the age of individual $i$ in year $c$, $X_{ic}^*$  is a dummy equal to one if this individual is above the age-65 threshold,  $D_c$ is an indicator for the post-ACA period, and $f(X_{ic},D_c)$ is a polynomial function of $X_{ic}$ whose terms include interactions with $D_c$.  As the design is not sharp, $M_{ic}$ (participation in Medicare) and $O_{ic}$ (participation in the ACA) are only partly determined by crossing the age-65 cutoff.  Indeed, some individuals are eligible for Medicare before 65 for disability reasons, and being eligible after 65 is contingent on having worked at least 40 quarters in covered employment. The estimator of the coefficient $\tau_{O}^{FRD}$ is our fuzzy difference-in-discontinuities estimator, and we obtain it through a two-stage-least-squares-type estimation.

The model in equation (\ref{empir_mod}) has been specified to reflect the general theoretical framework proposed in the previous section. However, in practice, the implementation of the ACA for individuals at age 65 included an extension of pre-existing Medicare benefits. This means that the model estimated is simpler, given by $$Y_{ic}= \alpha_1 +  \alpha_2 M_{ic}+\alpha_3 D_c+ \delta D_c \times M_{ic}+f(X_{ic},D_c)+\omega_{ic},$$ with $M_{ic}=\tau_0+\tau_1 X_{ic}^*+\tau_2 D_c+\tau_3 D_c\times X_{ic}^*+f(X_{ic},D_c)+\varphi_{ic}$. 

We consider several outcome variables ($Y_{ic}$), all related to health care access or use: whether a person delayed care last year for cost reasons; whether a person did not get care last year for cost reasons; whether a person saw a doctor or went to the hospital last year; whether a person could afford prescription medications, see a specialist, or receive follow-up care last year; and whether a person could get an appointment soon enough last year.


\subsection{Data}

We use survey data from the National Health Interview Survey (NHIS).\footnote{This data is available at \url{https://www.cdc.gov/nchs/nhis/data-questionnaires-documentation.htm}} In our baseline specification, we focus on 2012 and 2014 because, as previously described, major policy changes occurred in many states in 2013. Thus, for those states in which these changes occurred, 2013 is a reasonable choice for ACA implementation. Then, we take 2012 and 2014 as representing two moments in which crucial components related to the ACA had either been implemented or not.\footnote{Because of restrictions related to the birthdate of people surveyed, we could not include data for 2015 or 2016.}

For 2012 and 2014, the NHIS reports respondents' birth years and birth months, and what quarter of the calendar quarter the survey took place. We use this information to identify the age (rounded to the nearest quarter) of the respondents.  As in \cite{card2008impact},  we assume that a person who reaches his 65th birthday in the interview quarter has an age of 65 years and 0 quarters. Assuming a uniform distribution of interview dates, we can say that about one-half of these people will be 0-6 weeks younger than 65, and one-half will be 0-6 weeks older.

We limit our analysis to people who are over 55 and under 75, and to regions in which most states implemented the ACA by 2014. In classifying regions, we follow the scheme in the public NHIS data. This identifies the four Census Regions (Northeast, Midwest, South and West). In our baseline specification, we limit our analysis to the Northeast, Midwest and West regions, where most states implemented the ACA by 2014 (see the Kaiser Family Foundation, at \url{https://www.kff.org/}; see also \url{https://www.advisory.com/daily-briefing/resources/primers/medicaidmap}). Thus, we exclude the District of Columbia and the following states: AL, AR, DE, FL, GA, KY, LA, MD, MS, NC, OK, SC, TN, TX, VA and WV. Because of data use restrictions, in our main specification we include  a few states that did not implement the ACA by 2014 (i.e. ID, KS, ME, MO, NE, SD, UT, WI and WY) and exclude a few jurisdictions that implemented the ACA by 2014 (i.e. AR, DE, DC, KY, LA, MD and WV). However, our analysis is robust to the exclusion of the Midwest. The final sample size is 25,291 individuals, although some outcomes are only available for a smaller subsample.


\subsection{Evidence on Assumptions}

As previously argued, our proposal for identifying the ACA's effects can be best described as a LATE, and relies on the application of the LATE version of Theorem 2 (i.e.  Theorem 4). For this Theorem, the relevant Assumptions are 1, 4' and 6. In this section we discuss the plausibility of these assumptions to our case study.

Assumption 1 is consistent with most of the relevant literature, which uses age 65 as a threshold in the U.S. (see \cite{card2008impact} for an exhaustive list).

Participation in both Medicare and the ACA is partially determined by the same 65-year age threshold for eligibility. Figure \ref{inscovsqoverallgraph}  illustrates this by showing the age profiles of health insurance coverage estimated separately for each treatment (2012 for Medicare, plotted with circles, and 2014 for the ACA, plotted with diamonds). The figure shows that for each treatment, there is a significant increase in the coverage rates. This suggests that the age threshold of 65 provides a credible source of exogenous variation in insurance status for both policies. This also means that Assumptions 4' $(i)$ and 4' $(ii)$ are likely to be satisfied.

Figure \ref{inscovsqoverallgraph} also illustrates a second and important relationship between the likelihood that a person is eligible for Medicare and the likelihood they are eligible for the ACA, both at the same age-65 threshold: the rise in the share of coverage rates at age 65 is virtually the same for 2012 and 2014. This provides evidence that the probability of selection into Medicare and the ACA's Medicare are likely the same. This is consistent with the second part of Assumption 4' (iii)  (i.e. that $\tilde{O}^+- \tilde{O}^-=\tilde{M}^+ -\tilde{M}^{-}= \bar{M}^+-\bar{M}^-$).

Table \ref{inscovsqoverall} confirms the results in Figure \ref{inscovsqoverallgraph} by showing the effects of reaching age 65 on the insurance status for Medicare (Panel A) and the ACA (Panel B) on five insurance-related variables: the probability of having Medicare coverage, the probability of having any health insurance coverage, the probability of having private coverage, the probability of having two or more forms of coverage, and the probability that an individual's primary health insurance is a managed care program. Column (1) in Panels A and B shows that reaching age 65 significantly increases the probability of having Medicare in 2012 (Panel A) and 2014 (Panel B) and, importantly, that the increase in both probabilities is the same. Panel C confirms this result by showing the estimates of the  difference-in-discontinuities estimator where the dependent variable is insurance status. Column (1) in Panel C shows that the probability of having Medicare coverage is not affected by the ACA's rollout.  Columns (3) to (5) show that this result holds for the probability of having private coverage, the probability of having two or more forms of coverage, and the probability that an individual's primary health insurance is a managed care program.

Two policies that use the same cutoff are likely to be complements or substitutes. As previously argued, in our scenario the policies seem  to be complement. In this respect, note that all 2014 Medicare users are treated by the ACA's Medicare program. This is consistent with the first part of Assumption 4' (iii) (i.e. that $O_{ic}\geq M_{ic}$).

Assumptions 1 and 6 are difficult to test. However, we propose a set of placebo regressions to evaluate their plausibility. Assumption 6 (iii)  stipulates that in the absence of the ACA program, the effect of Medicare on the utilization of health care services should be the same.  In Tables \ref{placebo1} to \ref{placebo3}, we construct placebo fuzzy difference-in-discontinuities estimates for age groups or regions only affected by Medicare at age 65. The results in Table \ref{placebo1} suggest that if we consider the Midwest and South regions, the difference in the discontinuity in the level of access to care and health service utilization for seniors at age 65 is not different from zero at any conventional statistical level. As for Assumptions 6 (i) and (ii), we will assume that no senior will refuse to enroll in Medicare as a result of turning 65 in 2012 or 2014, and we allow for the decision to use the Medicare treatment to be related to returns. Unfortunately, to the best of our knowledge, there is no test for these assumptions for panel data. We assume that they hold in our setting.

Finally, Table \ref{placebo3} shows fuzzy difference-in-discontinuities estimates for several consecutive groups of years (prior to 2014). All but two of the differences are statistically indistinguishable from zero. These placebo regressions suggest that Medicare has the same effect across age groups in absence of the ACA policy. Overall, the placebo regressions suggest that Assumption 3 is reasonable for our sample.


\subsection{Empirical Results}
\label{section:results}

Panel A of Table \ref{Tab:access1} presents the fuzzy difference-in-discontinuities estimates for the effect of the ACA on access to care and health care services utilization for 65-year-old Americans. We consider three self-reported access to care outcomes from the NHIS questionnaires: $(i)$ ``During the past 12 months, has medical care been delayed for the individual because of worry about the cost?'' (first column) $(ii)$ ``During the past 12 months, was there any time when the individual needed medical care but did not get it because the individual could not afford it?'' (second column) $(iii)$ Did the individual have at least one doctor visit in the 12 months? (third column). In the last column, we report estimated  $\tau_{O}^{FRD}$ values for health care services utilization, specifically individuals' overnight hospital stays in the previous year.

The results show that, overall, individuals who turned 65 in 2014 were 3.6\% more likely to delay care due to costs. However, the estimated fuzzy difference-in-discontinuities coefficients on the other two access to care outcomes in columns (2) and (3) are not significant at the standard levels. These results suggest that the effect of the ACA on cost-related access to care is mixed. With respect to health care service utilization, we note a significant 4.8\% increase in hospitalization rates for 65-year-olds in 2014. Panel B of Table \ref{Tab:access1} shows the effect of the ACA on several cost-related access to care outcomes for individuals at age 65. Overall, the proportion of individuals who reported that they could not afford to pay for prescription drugs, see a specialist or have a follow-up treatment increased by 7.0\%, 7.2\%,  and 5.5\%, respectively.

We also perform a subgroup analysis considering ethnicity and education. These results should be treated with caution because identification assumptions may not hold for some subgroups. Table \ref{Tab:accesset} presents the results by ethnicity. It reveals that for both whites (non-Hispanics) and minorities (blacks or Hispanics), there is no significant ACA effect on access to care or health care services utilization (see the first panel of Table \ref{Tab:accesset}). Interestingly, we observe a clear heterogeneity in the ACA's effects within individuals' ethnicity. The ACA increased the proportion of blacks aged 65 who had seen a doctor the previous year by 36.7\%, and the proportion of whites (non-Hispanic) with a least one hospitalization by 5.1\%, but 15.6\% more Hispanics forewent access to care the previous year for cost-related reasons. Moreover, panel B of Table  \ref{Tab:accesset} shows that the proportion of whites (non-Hispanics) who could not afford prescription drugs, a specialist visit or follow-up care all increased as a consequence of the ACA. The proportion of Hispanics who could not afford prescription drugs also increased by 23.3\%. Panel B in Table \ref{Tab:accessed} shows that the ACA significantly increased the proportion of high-school-dropout seniors who could not afford a specialist visit or follow-up care, compared to more educated seniors. An additional 11.4\% of seniors with a college education could not afford prescription drugs.

The results suggest that, in general, the ACA exacerbated cost-related access barriers for seniors. In 2014, more 65-year-olds delayed care, could not see a specialist, or could not maintain continuity of care due to costs. This might be in part due to the fact that the implementation of the ACA is associated with the increase in Medicare Part B premiums and the reduction of the government's payment per enrollee to private insurance companies. The ACA increased the proportion of seniors who could not afford prescription drugs. This is surprising, since the ACA was set to reduce Medicare Part D enrollees' out-of-pocket expenses. The increase in hospitalization rates might arise from paying physicians under the ACA based on the quality of services provided and penalizing hospitals with high readmission rates.  Interestingly, the ACA significantly improved access to physicians' services for blacks, and increased hospital stays for whites (non-Hispanics). However, under the ACA, more Hispanics were unable to access to care for cost-related reasons.


\subsection{Robustness Checks}

Identifying the effect of the ACA on access to care requires that all other factors that might affect a 65-year-old's access to care trend smoothly \citep{card2008impact}. An example of a confounding factor is an individual's employment status, since 65 is the typical retirement age, and employed older adults have been found to have better health outcomes than unemployed older adults \citep{ Kachan2015}. This may lead to a biased $\tau_{O}^{FRD}$ if employment status had a significant impact on individuals' health outcomes at the discontinuity (age 65) in 2014.

The estimated effects of the discontinuity at age 65 on employment status are presented in Table \ref{Tab:emp}. We consider two employment variables: whether an individual is employed, and whether the individual is a full-time employee. The results show non-significant coefficients, which suggests that there are no discontinuities at 65 in both cases. Figure \ref{fig:discemp}  illustrates the continuity at age 65 for employment. We also perform the same test using different subgroups: ethnicity and education. The results are presented in Tables \ref{Tab:empet} and \ref{Tab:emped}. Again, in all cases the results show no evidence of a discontinuity for employment. We obtain similar results with smaller bandwidths (see Table \ref{Tab:empsmall}). Therefore we rule out employment as a confounding factor when estimating the ACA's effect on access to health care services.

We also check the robustness of the results obtained in the previous section to the inclusion of second-order polynomial terms for age (Tables \ref{tab:agesquared1}- \ref{tab:agesquared3}) and the use of a smaller bandwidth (Tables \ref{tab:smallerband1}-\ref{tab:smallerband3}). Overall, the results are qualitatively similar to those presented in section \ref{section:results}. Finally, to re-estimate the ACA's effects, we split the sample into two parts: respondents who are enrolled in Medicare Part D, and respondents who are enrolled in Medicare Part A, B or C. The results in Tables \ref{tab:partD} and \ref{tab:partABC} show similar patterns for the ACA's effects on access to care.

We implement an additional robustness check by removing individuals who turned 65 in the first half of 2014 from the sample. The motivation for this additional check is that for most of our outcomes, the information we use is from the 12 months prior to when the person was interviewed, and if an individual turned 65 at the beginning of 2014, the effect that we are capturing may be more likely to correspond to an event from 2013 instead of 2014. Tables \ref{access_no65} shows that for those outcomes related to the alternative measures of access to care, the results are similar to those presented in section \ref{section:results}, and for the other outcomes, the results are equal in sign but of lesser statistical significance.


\section{Discussion and Conclusion }

The ACA has generated significant media attention since 2009. Evaluating its effects on the U.S. health care system is necessary to inform the debate on the importance of the ACA. We develop and apply an identification strategy in a fuzzy difference-in-discontinuities design to tease out the causal effect of the ACA on the U.S. population's access to care. Our identification results rely on the presence of a pooled cross-sectional or panel dataset to which a ``before-and-after" policy evaluation can be applied. The partial effect of the policy of interest --- in our application, the ACA --- can be identified under a strict condition of equality in the treatment probability changes at the cutoff point. For the ACA, this condition is likely to be satisfied for the overall sample. We apply our fuzzy difference-in-discontinuities method to self-reported access to care outcomes, using NHIS data over a three-year period (2012-2014).

Our results show that the ACA had an adverse impact on access to health care services for cost reasons. In particular, under the ACA, the likelihood of delaying care due to cost, and the likelihood of being unable to afford a prescribed drug, a specialist visit, or a follow-up treatment, have increased. These results suggest that an increasing number of seniors (aged 65 or older) reported unmet health care needs because of a lack of financial resources. This should concern policymakers, as people who report unmet health care needs face higher risks of mortality \citep{Alonso1997} and of deterioration in their health status \citep{Okumura2013}.

Two mechanisms might explain why the ACA increased cost-related barriers: it increased the demand for health care services by increasing coverage, and it reduced the supply of health services by replacing a generous, uncritical fee-for-service payment model with a capitation-based model in which care providers are paid a fixed amount for each patient to provide a bundle of pre-determined services.\footnote{ For more details, see the Public Law 111--148--MAR. 23, 2010 at  \url{https://www.gpo.gov/fdsys/pkg/PLAW-111publ148/pdf/PLAW-111publ148.pdf}  (accessed in September 2018).} Note that a fee-for-service scheme motivates providers to increase the quantity of services provided \citep{ MCGUIRE2000461}. In contrast, capitation creates incentives to underprovide services, and may improve the quality of services \citep{ScottA2000}. This suggests that along with facilitating access to insurance coverage, the ACA should have included measures or incentives to increase the supply of health services and prevent the increase of insurers' premiums and beneficiaries' out-of-pocket expenses. Our results also show that the ACA improved hospital stays for patients as a result of moving away from a fee-for-service model to a capitation system, which is designed to reward quality instead of quantity.

Our results for access to health care services for the previously insured population may be capturing short-term effects of the ACA. For example, if the ACA successfully increased the quality of care by providing more preventive services, the number of patients per physician  might decrease over time, reducing the demand for care and the quantity consumed. This in turn could reduce access to care issues in the long term. Though estimating such effects will require long-term panel data, which do not yet exist, our identification and estimation strategy will still be valid.


\newpage

\bibliographystyle{ecca}
\bibliography{Fuzzybib}


\clearpage
\hbox {}

\begin{figure}
\caption{Coverage by age: 2012 vs. 2014}
\centering
\label{inscovsqoverallgraph}
  \includegraphics[width=15cm]{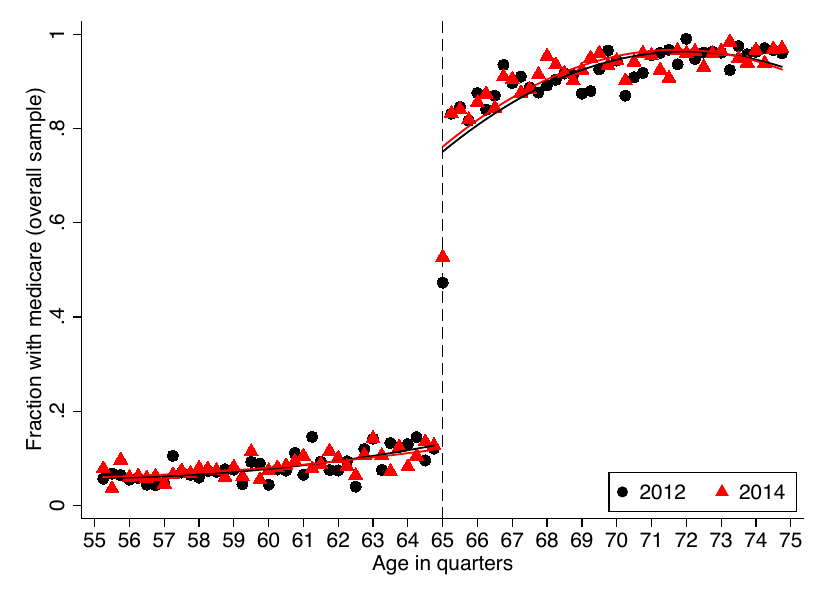}\\
 \label{discconv}
\end{figure}


\begin{figure}
\caption{Employment by age: 2012 vs. 2014}
\centering
  \includegraphics[width=15cm]{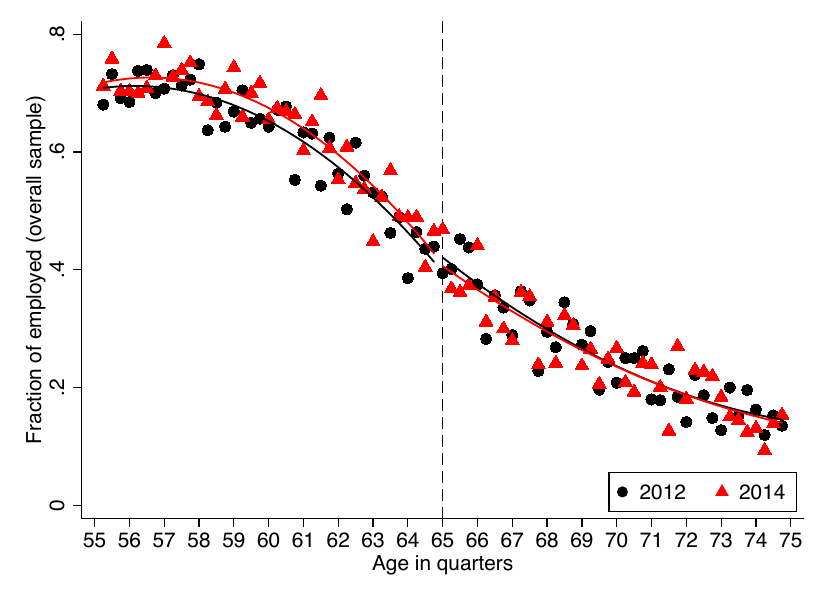}\\
 \label{fig:discemp}
\end{figure}


\begin{figure}
\caption{Coverage (Part D) by age: 2012 vs. 2014}
\centering
\label{}
  \includegraphics[width=15cm]{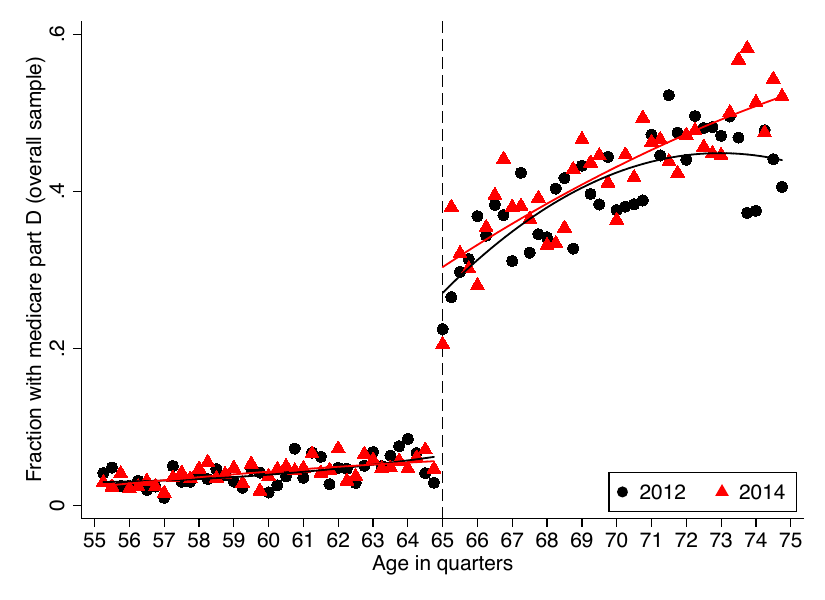}\\
 \label{discconv}
\end{figure}


\begin{figure}
\caption{Coverage (Parts A, B or C) by age: 2012 vs. 2014}
\centering
\label{}
  \includegraphics[width=15cm]{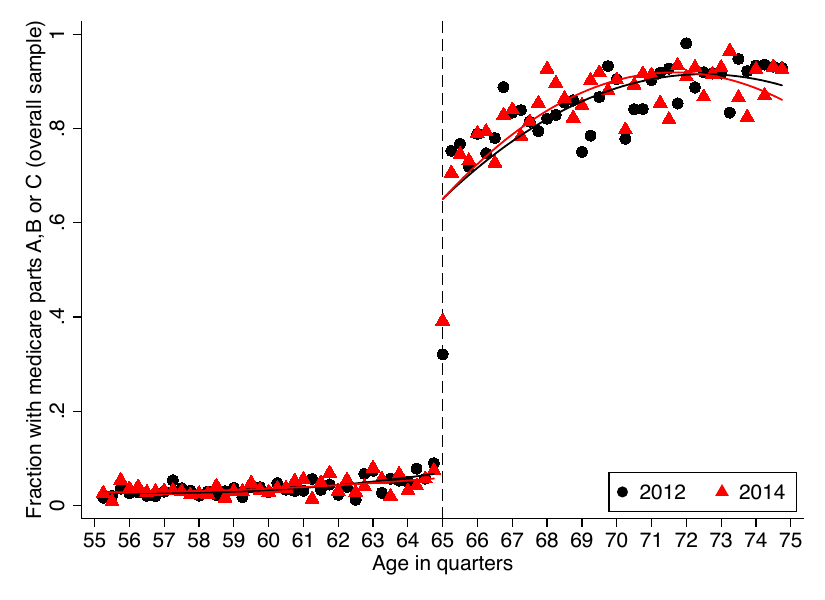}\\
 \label{discconv}
\end{figure}


\clearpage
\hbox {}

\begin{table}
\begin{center}
{
\renewcommand{\arraystretch}{0.7}
\setlength{\tabcolsep}{5pt}
\captionsetup{font={normalsize,bf}}
\caption {Insurance Coverage}  \label{inscovsqoverall}
\footnotesize
\centering  \begin{tabular}{lccccc}
\hline\hline
& &   \\
    & \multicolumn{1}{c}{On} & \multicolumn{1}{c}{Any}& \multicolumn{1}{c}{Private}& \multicolumn{1}{c}{2+ forms}& \multicolumn{1}{c}{Managed}  \\
       & \multicolumn{1}{c}{Medicare} & \multicolumn{1}{c}{insurance}& \multicolumn{1}{c}{ coverage}& \multicolumn{1}{c}{ of coverage}& \multicolumn{1}{c}{care}  \\
    \cmidrule[0.2pt](l){2-2}\cmidrule[0.2pt](l){3-3}\cmidrule[0.2pt](l){4-4}\cmidrule[0.2pt](l){5-5}\cmidrule[0.2pt](l){6-6}
& (1)& (2) & (3)& (4) & (5) \\   \hline
& &   \\
&\multicolumn{5}{c}{Panel A: RD Estimates at age 65 (2012)}   \\ \cmidrule[0.2pt](l){2-6}
\primitiveinput{inscovRDD2012_r1.tex}
& &   \\
\hline
& &   \\
&\multicolumn{5}{c}{Panel B: RD Estimates at age 65 (2014)}   \\ \cmidrule[0.2pt](l){2-6}
\primitiveinput{inscovRDD2014_r1.tex}
& &   \\
\hline
& &   \\
&\multicolumn{5}{c}{Panel C:  Diff-in-discs Estimates}   \\ \cmidrule[0.2pt](l){2-6}
\primitiveinput{inscovDfinDiscR1R2R42012sq_r1.tex}
& &   \\
\hline \hline
\multicolumn{6}{p{11.5cm}}{\scriptsize{\textbf{Notes:} All columns in Panels A and B report RD estimates at age 65 using data from the Northeast, Midwest, and West regions in 2012 (Panel A) and 2014 (panel B). All columns in Panel C report the difference-in-discontinuities estimates using data from the Northeast, Midwest, and West regions, and compare outcomes in 2012 and 2014. The models include quadratic controls for age, fully interacted with dummies for age 65 or older and 2014. Other controls in these models include indicators for gender, race/ethnicity, education and region. Samples for the regression models only include people between the ages of 55 and 75.  Standard errors (in parentheses) are clustered by quarter of age.}} \\
\end{tabular}
}
\end{center}
\end{table}


\begin{table}
\begin{center}
{
\renewcommand{\arraystretch}{0.6}
\setlength{\tabcolsep}{3pt}
\captionsetup{font={normalsize,bf}}
\caption {Access to Care}  \label{Tab:access1}
\footnotesize
\centering
\begin{tabular}{lcccccc}
\hline\hline
& &   \\
&\multicolumn{4}{c}{Panel A: Baseline measures}   \\ \cmidrule[0.2pt](l){2-5}
    & \multicolumn{1}{c}{Delayed care} & \multicolumn{1}{c}{Did not get}& \multicolumn{1}{c}{Saw doctor}& \multicolumn{1}{c}{Hospital stay} \\
       & \multicolumn{1}{c}{last year} & \multicolumn{1}{c}{care last year}& \multicolumn{1}{c}{last year}& \multicolumn{1}{c}{last year} \\
    \cmidrule[0.2pt](l){2-2}\cmidrule[0.2pt](l){3-3}\cmidrule[0.2pt](l){4-4}\cmidrule[0.2pt](l){5-5}
& (1)& (2)& (3)& (4)  \\   \hline
& &   \\
\primitiveinput{acctocareFDfinDiscR1R2R42012_r1.tex}
& &   \\
\hline
& &   \\
&\multicolumn{4}{c}{Panel B: Alternative measures}   \\ \cmidrule[0.2pt](l){2-5}
        & \multicolumn{1}{c}{Could not} & \multicolumn{1}{c}{Could not}& \multicolumn{1}{c}{Could not}& \multicolumn{1}{c}{Could not get} \\
    & \multicolumn{1}{c}{afford prescription} & \multicolumn{1}{c}{afford to see a}& \multicolumn{1}{c}{afford follow-up}& \multicolumn{1}{c}{appointment soon} \\
       & \multicolumn{1}{c}{medicine last year} & \multicolumn{1}{c}{specialist last year}& \multicolumn{1}{c}{care last year}& \multicolumn{1}{c}{enough last year} \\
    \cmidrule[0.2pt](l){2-2}\cmidrule[0.2pt](l){3-3}\cmidrule[0.2pt](l){4-4}\cmidrule[0.2pt](l){5-5}
    & (1)& (2)& (3)& (4)  \\   \hline
& &   \\
\primitiveinput{altacctocareFDfinDiscR1R2R42012_r1.tex}
& &   \\
\hline \hline
\multicolumn{5}{p{14cm}}{\scriptsize{\textbf{Notes:} All columns report the fuzzy difference-in-discontinuities estimates using data from the Northeast, Midwest, and West regions, and compare outcomes in 2012 and 2014. The models include linear controls for age, fully interacted with dummies for age 65 or older and 2014. Other controls in these models include indicators for gender, race/ethnicity, education and region. Samples for the regression models only include people between the ages of 55 and 75.  Standard errors (in parentheses) are clustered by quarter of age.} } \\
\end{tabular}
}
\end{center}
\end{table}


\begin{table}
\begin{center}
{
\renewcommand{\arraystretch}{0.6}
\setlength{\tabcolsep}{3pt}
\captionsetup{font={normalsize,bf}}
\caption {Access to Care by Ethnicity}  \label{Tab:accesset}
\footnotesize
\centering
\begin{tabular}{lcccccc}
\hline\hline
& &   \\
&\multicolumn{4}{c}{Panel A: Baseline measures}   \\ \cmidrule[0.2pt](l){2-5}
    & \multicolumn{1}{c}{Delayed care} & \multicolumn{1}{c}{Did not get}& \multicolumn{1}{c}{Saw doctor}& \multicolumn{1}{c}{Hospital stay} \\
       & \multicolumn{1}{c}{last year} & \multicolumn{1}{c}{care last year}& \multicolumn{1}{c}{last year}& \multicolumn{1}{c}{last year} \\
    \cmidrule[0.2pt](l){2-2}\cmidrule[0.2pt](l){3-3}\cmidrule[0.2pt](l){4-4}\cmidrule[0.2pt](l){5-5}
& (1)& (2)& (3)& (4)  \\   \hline
& &   \\
\primitiveinput{acctocareFDfinDiscR1R2R42012_r8.tex}
& &   \\
\primitiveinput{acctocareFDfinDiscR1R2R42012_r9.tex}
& &   \\
\primitiveinput{acctocareFDfinDiscR1R2R42012_r10.tex}
& &   \\
\primitiveinput{acctocareFDfinDiscR1R2R42012_r11.tex}
& &   \\
\primitiveinput{acctocareFDfinDiscR1R2R42012_r12.tex}
& &   \\
\hline
& &   \\
&\multicolumn{4}{c}{Panel B: Alternative measures}   \\ \cmidrule[0.2pt](l){2-5}
        & \multicolumn{1}{c}{Could not} & \multicolumn{1}{c}{Could not}& \multicolumn{1}{c}{Could not}& \multicolumn{1}{c}{Could not get} \\
    & \multicolumn{1}{c}{afford prescription} & \multicolumn{1}{c}{afford to see a}& \multicolumn{1}{c}{afford follow-up}& \multicolumn{1}{c}{appointment soon} \\
       & \multicolumn{1}{c}{medicine last year} & \multicolumn{1}{c}{specialist last year}& \multicolumn{1}{c}{care last year}& \multicolumn{1}{c}{enough last year} \\
    \cmidrule[0.2pt](l){2-2}\cmidrule[0.2pt](l){3-3}\cmidrule[0.2pt](l){4-4}\cmidrule[0.2pt](l){5-5}
    & (1)& (2)& (3)& (4)  \\   \hline
& &   \\
\primitiveinput{altacctocareFDfinDiscR1R2R42012_r8.tex}
& &   \\
\primitiveinput{altacctocareFDfinDiscR1R2R42012_r9.tex}
& &   \\
\primitiveinput{altacctocareFDfinDiscR1R2R42012_r10.tex}
& &   \\
\primitiveinput{altacctocareFDfinDiscR1R2R42012_r11.tex}
& &   \\
\primitiveinput{altacctocareFDfinDiscR1R2R42012_r12.tex}
& &   \\
\hline \hline
\multicolumn{5}{p{15.3cm}}{\scriptsize{\textbf{Notes:} All columns report the fuzzy difference-in-discontinuities estimates using data from the Northeast, Midwest, and West regions, and compare outcomes in 2012 and 2014. The models include linear control sfor age, fully interacted with dummies for age 65 or older and 2014. Other controls in these models include indicators for gender, race/ethnicity, education and region. Samples for the regression models only include people between the ages of 55 and 75. Standard errors (in parentheses) are clustered by quarter of age.} } \\
\end{tabular}
}
\end{center}
\end{table}

\clearpage


\begin{table}
\begin{center}
{
\renewcommand{\arraystretch}{0.6}
\setlength{\tabcolsep}{3pt}
\captionsetup{font={normalsize,bf}}
\caption {Access to Care by Education}  \label{Tab:accessed}
\footnotesize
\centering
\begin{tabular}{lcccccc}
\hline\hline
& &   \\
&\multicolumn{4}{c}{Panel A: Baseline measures}   \\ \cmidrule[0.2pt](l){2-5}
    & \multicolumn{1}{c}{Delayed care} & \multicolumn{1}{c}{Did not get}& \multicolumn{1}{c}{Saw doctor}& \multicolumn{1}{c}{Hospital stay} \\
       & \multicolumn{1}{c}{last year} & \multicolumn{1}{c}{care last year}& \multicolumn{1}{c}{last year}& \multicolumn{1}{c}{last year} \\
    \cmidrule[0.2pt](l){2-2}\cmidrule[0.2pt](l){3-3}\cmidrule[0.2pt](l){4-4}\cmidrule[0.2pt](l){5-5}
& (1)& (2)& (3)& (4)  \\   \hline
& &   \\
\primitiveinput{acctocareFDfinDiscR1R2R42012_r13.tex}
& &   \\
\primitiveinput{acctocareFDfinDiscR1R2R42012_r14.tex}
& &   \\
\primitiveinput{acctocareFDfinDiscR1R2R42012_r15.tex}
& &   \\
\hline
& &   \\
&\multicolumn{4}{c}{Panel B: Alternative measures}   \\ \cmidrule[0.2pt](l){2-5}
        & \multicolumn{1}{c}{Could not} & \multicolumn{1}{c}{Could not}& \multicolumn{1}{c}{Could not}& \multicolumn{1}{c}{Could not get} \\
    & \multicolumn{1}{c}{afford prescription} & \multicolumn{1}{c}{afford to see a}& \multicolumn{1}{c}{afford follow-up}& \multicolumn{1}{c}{appointment soon} \\
       & \multicolumn{1}{c}{medicine last year} & \multicolumn{1}{c}{specialist last year}& \multicolumn{1}{c}{care last year}& \multicolumn{1}{c}{enough last year} \\
    \cmidrule[0.2pt](l){2-2}\cmidrule[0.2pt](l){3-3}\cmidrule[0.2pt](l){4-4}\cmidrule[0.2pt](l){5-5}
    & (1)& (2)& (3)& (4)  \\   \hline
& &   \\
\primitiveinput{altacctocareFDfinDiscR1R2R42012_r13.tex}
& &   \\
\primitiveinput{altacctocareFDfinDiscR1R2R42012_r14.tex}
& &   \\
\primitiveinput{altacctocareFDfinDiscR1R2R42012_r15.tex}
& &   \\
\hline \hline
\multicolumn{5}{p{15.3cm}}{\scriptsize{\textbf{Notes:} All columns report the fuzzy difference-in-discontinuities estimates using data from the Northeast, Midwest, and West regions, and compare outcomes in 2012 and 2014. The models include linear controls for age, fully interacted with dummies for age 65 or older and 2014. Other controls in these models include indicators for gender, race/ethnicity, education and region. Samples for the regression models only include people between the ages of 55 and 75. Standard errors (in parentheses) are clustered by quarter of age.} } \\
\end{tabular}
}
\end{center}
\end{table}


\begin{table}
\begin{center}
{
\renewcommand{\arraystretch}{0.5}
\setlength{\tabcolsep}{3pt}
\captionsetup{font={normalsize,bf}}
\caption {Access to Care by Ethnicity and Education}  \label{Tab:accesseted}
\footnotesize
\centering
\begin{tabular}{lcccccc}
\hline\hline
& &   \\
&\multicolumn{4}{c}{Panel A: Baseline measures}   \\ \cmidrule[0.2pt](l){2-5}
    & \multicolumn{1}{c}{Delayed care} & \multicolumn{1}{c}{Did not get}& \multicolumn{1}{c}{Saw doctor}& \multicolumn{1}{c}{Hospital stay} \\
       & \multicolumn{1}{c}{last year} & \multicolumn{1}{c}{care last year}& \multicolumn{1}{c}{last year}& \multicolumn{1}{c}{last year} \\
    \cmidrule[0.2pt](l){2-2}\cmidrule[0.2pt](l){3-3}\cmidrule[0.2pt](l){4-4}\cmidrule[0.2pt](l){5-5}
& (1)& (2)& (3)& (4)  \\   \hline
& &   \\
White non-Hispanic: & &   \\   \cmidrule[0.2pt](l){1-1}
& &   \\
\primitiveinput{acctocareFDfinDiscR1R2R42012_r2.tex}
& &   \\
\primitiveinput{acctocareFDfinDiscR1R2R42012_r3.tex}
& &   \\
\primitiveinput{acctocareFDfinDiscR1R2R42012_r4.tex}
& &   \\
Minority: & &   \\    \cmidrule[0.2pt](l){1-1}
& &   \\
\primitiveinput{acctocareFDfinDiscR1R2R42012_r5.tex}
& &   \\
\primitiveinput{acctocareFDfinDiscR1R2R42012_r6.tex}
& &   \\
\primitiveinput{acctocareFDfinDiscR1R2R42012_r7.tex}
& &   \\
\hline
& &   \\
&\multicolumn{4}{c}{Panel B: Alternative measures}   \\ \cmidrule[0.2pt](l){2-5}
        & \multicolumn{1}{c}{Could not} & \multicolumn{1}{c}{Could not}& \multicolumn{1}{c}{Could not}& \multicolumn{1}{c}{Could not get} \\
    & \multicolumn{1}{c}{afford prescription} & \multicolumn{1}{c}{afford to see a}& \multicolumn{1}{c}{afford follow-up}& \multicolumn{1}{c}{appointment soon} \\
       & \multicolumn{1}{c}{medicine last year} & \multicolumn{1}{c}{specialist last year}& \multicolumn{1}{c}{care last year}& \multicolumn{1}{c}{enough last year} \\
    \cmidrule[0.2pt](l){2-2}\cmidrule[0.2pt](l){3-3}\cmidrule[0.2pt](l){4-4}\cmidrule[0.2pt](l){5-5}
    & (1)& (2)& (3)& (4)  \\   \hline
& &   \\
White non-Hispanic: & &   \\   \cmidrule[0.2pt](l){1-1}
& &   \\
\primitiveinput{altacctocareFDfinDiscR1R2R42012_r2.tex}
& &   \\
\primitiveinput{altacctocareFDfinDiscR1R2R42012_r3.tex}
& &   \\
\primitiveinput{altacctocareFDfinDiscR1R2R42012_r4.tex}
& &   \\
Minority: & &   \\    \cmidrule[0.2pt](l){1-1}
& &   \\
\primitiveinput{altacctocareFDfinDiscR1R2R42012_r5.tex}
& &   \\
\primitiveinput{altacctocareFDfinDiscR1R2R42012_r6.tex}
& &   \\
\primitiveinput{altacctocareFDfinDiscR1R2R42012_r7.tex}
& &   \\
\hline \hline
\multicolumn{5}{p{15.3cm}}{\scriptsize{\textbf{Notes:} All columns report the fuzzy difference-in-discontinuities estimates using data from the Northeast, Midwest, and West regions, and compare outcomes in 2012 and 2014. The models include linear controls for age, fully interacted with dummies for age 65 or older and 2014. Other controls in these models include indicators for gender, race/ethnicity, education and region. Samples for the regression models only include people between the ages of 55 and 75. Standard errors (in parentheses) are clustered by quarter of age.} } \\
\end{tabular}
}
\end{center}
\end{table}


\begin{table}
\begin{center}
{
\renewcommand{\arraystretch}{0.7}
\setlength{\tabcolsep}{10pt}
\captionsetup{font={normalsize,bf}}
\caption {Employment}  \label{Tab:emp}
\footnotesize
\centering  \begin{tabular}{lcc}
\hline\hline
& &   \\
    & \multicolumn{1}{c}{Employed} & \multicolumn{1}{c}{Full time}  \\
    \cmidrule[0.2pt](l){2-2}\cmidrule[0.2pt](l){3-3}
& (1)& (2) \\   \hline
& &   \\
\primitiveinput{empDfinDiscR1R2R42012sq_r1.tex}
& &   \\
\hline     \hline
\multicolumn{3}{p{7cm}}{\scriptsize{\textbf{Notes:} All columns report the difference-in-discontinuities estimates using data from the Northeast, Midwest, and West regions, and compare outcomes in 2012 and 2014. The models include quadratic controls for age, fully interacted with dummies for age 65 or older and 2014. Other controls in these models include indicators for gender, race/ethnicity, education and region. Samples for the regression models only include people between the ages of 55 and 75. Standard errors (in parentheses) are clustered by quarter of age.} } \\
\end{tabular}
}
\end{center}
\end{table}

\begin{table}
\begin{center}
{
\renewcommand{\arraystretch}{0.6}
\setlength{\tabcolsep}{4pt}
\captionsetup{font={normalsize,bf}}
\caption {Insurance Coverage by Ethnicity}  \label{}
\footnotesize
\centering  \begin{tabular}{lccccc}
\hline\hline
& &   \\
    & \multicolumn{1}{c}{On} & \multicolumn{1}{c}{Any}& \multicolumn{1}{c}{Private}& \multicolumn{1}{c}{2+ forms}& \multicolumn{1}{c}{Managed}  \\
       & \multicolumn{1}{c}{Medicare} & \multicolumn{1}{c}{insurance}& \multicolumn{1}{c}{ coverage}& \multicolumn{1}{c}{ of coverage}& \multicolumn{1}{c}{care}  \\
    \cmidrule[0.2pt](l){2-2}\cmidrule[0.2pt](l){3-3}\cmidrule[0.2pt](l){4-4}\cmidrule[0.2pt](l){5-5}\cmidrule[0.2pt](l){6-6}
& (1)& (2) & (3)& (4) & (5) \\   \hline
& &   \\
\primitiveinput{inscovDfinDiscR1R2R42012sq_r8.tex}
& &   \\
\primitiveinput{inscovDfinDiscR1R2R42012sq_r9.tex}
& &   \\
\primitiveinput{inscovDfinDiscR1R2R42012sq_r10.tex}
& &   \\
\primitiveinput{inscovDfinDiscR1R2R42012sq_r11.tex}
& &   \\
\primitiveinput{inscovDfinDiscR1R2R42012sq_r12.tex}
& &   \\
\hline\hline
\multicolumn{6}{p{12cm}}{\scriptsize{\textbf{Notes:} All columns report the difference-in-discontinuities estimates using data from the Northeast, Midwest, and West regions, and compare outcomes in 2012 and 2014. The models include quadratic controls for age, fully interacted with dummies for age 65 or older and 2014. Other controls in these models include indicators for gender, race/ethnicity, education and region. Samples for the regression models only include people between the ages of 55 and 75. Standard errors (in parentheses) are clustered by quarter of age.} } \\
\end{tabular}
}
\end{center}
\end{table}

\begin{table}
\begin{center}
{
\renewcommand{\arraystretch}{0.6}
\setlength{\tabcolsep}{4pt}
\captionsetup{font={normalsize,bf}}
\caption {Insurance Coverage by Education}  \label{}
\footnotesize
\centering  \begin{tabular}{lccccc}
\hline\hline
& &   \\
    & \multicolumn{1}{c}{On} & \multicolumn{1}{c}{Any}& \multicolumn{1}{c}{Private}& \multicolumn{1}{c}{2+ forms}& \multicolumn{1}{c}{Managed}  \\
       & \multicolumn{1}{c}{Medicare} & \multicolumn{1}{c}{insurance}& \multicolumn{1}{c}{ coverage}& \multicolumn{1}{c}{ of coverage}& \multicolumn{1}{c}{care}  \\
    \cmidrule[0.2pt](l){2-2}\cmidrule[0.2pt](l){3-3}\cmidrule[0.2pt](l){4-4}\cmidrule[0.2pt](l){5-5}\cmidrule[0.2pt](l){6-6}
& (1)& (2) & (3)& (4) & (5) \\   \hline
& &   \\
\primitiveinput{inscovDfinDiscR1R2R42012sq_r13.tex}
& &   \\
\primitiveinput{inscovDfinDiscR1R2R42012sq_r14.tex}
& &   \\
\primitiveinput{inscovDfinDiscR1R2R42012sq_r15.tex}
& &   \\
\hline\hline
\multicolumn{6}{p{12cm}}{\scriptsize{\textbf{Notes:} All columns report the difference-in-discontinuities estimates using data from the Northeast, Midwest, and West regions, and compare outcomes in 2012 and 2014. The models include quadratic controls for age, fully interacted with dummies for age 65 or older and 2014. Other controls in these models include indicators for gender, race/ethnicity, education and region. Samples for the regression models only include people between the ages of 55 and 75. Standard errors (in parentheses) are clustered by quarter of age.} } \\
\end{tabular}
}
\end{center}
\end{table}

\begin{table}
\begin{center}
{
\renewcommand{\arraystretch}{0.6}
\setlength{\tabcolsep}{4pt}
\captionsetup{font={normalsize,bf}}
\caption {Employment  by Ethnicity}  \label{Tab:empet}
\footnotesize
\centering  \begin{tabular}{lcc}
\hline\hline
& &   \\
    & \multicolumn{1}{c}{Employed} & \multicolumn{1}{c}{Full time}  \\
    \cmidrule[0.2pt](l){2-2}\cmidrule[0.2pt](l){3-3}
& (1)& (2) \\   \hline
& &   \\
\primitiveinput{empDfinDiscR1R2R42012sq_r8.tex}
& &   \\
\primitiveinput{empDfinDiscR1R2R42012sq_r9.tex}
& &   \\
\primitiveinput{empDfinDiscR1R2R42012sq_r10.tex}
& &   \\
\primitiveinput{empDfinDiscR1R2R42012sq_r11.tex}
& &   \\
\primitiveinput{empDfinDiscR1R2R42012sq_r12.tex}
& &   \\
\hline\hline
\multicolumn{3}{p{7cm}}{\scriptsize{\textbf{Notes:} All columns report the difference-in-discontinuities estimates using data from the Northeast, Midwest, and West regions, and compare outcomes in 2012 and 2014. The models include quadratic controls for age, fully interacted with dummies for age 65 or older and 2014. Other controls in these models include indicators for gender, race/ethnicity, education and region. Samples for the regression models only include people between the ages of 55 and 75. Standard errors (in parentheses) are clustered by quarter of age.} } \\
\end{tabular}
}
\end{center}
\end{table}

\begin{table}
\begin{center}
{
\renewcommand{\arraystretch}{0.6}
\setlength{\tabcolsep}{4pt}
\captionsetup{font={normalsize,bf}}
\caption {Employment  by Education}  \label{Tab:emped}
\footnotesize
\centering  \begin{tabular}{lcc}
\hline\hline
& &   \\
    & \multicolumn{1}{c}{Employed} & \multicolumn{1}{c}{Full time}  \\
    \cmidrule[0.2pt](l){2-2}\cmidrule[0.2pt](l){3-3}
& (1)& (2) \\   \hline
& &   \\
\primitiveinput{empDfinDiscR1R2R42012sq_r13.tex}
& &   \\
\primitiveinput{empDfinDiscR1R2R42012sq_r14.tex}
& &   \\
\primitiveinput{empDfinDiscR1R2R42012sq_r15.tex}
& &   \\
\hline\hline
\multicolumn{3}{p{8cm}}{\scriptsize{\textbf{Notes:} All columns report the difference-in-discontinuities estimates using data from the Northeast, Midwest, and West regions, and compare outcomes in 2012 and 2014. The models include quadratic controls for age, fully interacted with dummies for age 65 or older and 2014. Other controls in these models include indicators for gender, race/ethnicity, education and region. Samples for the regression models only include people between the ages of 55 and 75. Standard errors (in parentheses) are clustered by quarter of age.} } \\
\end{tabular}
}
\end{center}
\end{table}


\begin{table}
\begin{center}
{
\renewcommand{\arraystretch}{0.6}
\setlength{\tabcolsep}{3pt}
\captionsetup{font={normalsize,bf}}
\caption {Access to Care (with Quadratic Controls for Age)}  \label{tab:agesquared1}
\footnotesize
\centering
\begin{tabular}{lcccccc}
\hline\hline
& &   \\
&\multicolumn{4}{c}{Panel A: Baseline measures}   \\ \cmidrule[0.2pt](l){2-5}
    & \multicolumn{1}{c}{Delayed care} & \multicolumn{1}{c}{Did not get}& \multicolumn{1}{c}{Saw doctor}& \multicolumn{1}{c}{Hospital stay} \\
       & \multicolumn{1}{c}{last year} & \multicolumn{1}{c}{care last year}& \multicolumn{1}{c}{last year}& \multicolumn{1}{c}{last year} \\
    \cmidrule[0.2pt](l){2-2}\cmidrule[0.2pt](l){3-3}\cmidrule[0.2pt](l){4-4}\cmidrule[0.2pt](l){5-5}
& (1)& (2)& (3)& (4)  \\   \hline
& &   \\
\primitiveinput{acctocareFDfinDiscR1R2R42012sq_r1.tex}
& &   \\
\hline
& &   \\
&\multicolumn{4}{c}{Panel B: Alternative measures}   \\ \cmidrule[0.2pt](l){2-5}
        & \multicolumn{1}{c}{Could not} & \multicolumn{1}{c}{Could not}& \multicolumn{1}{c}{Could not}& \multicolumn{1}{c}{Could not get} \\
    & \multicolumn{1}{c}{afford prescription} & \multicolumn{1}{c}{afford to see a}& \multicolumn{1}{c}{afford follow-up}& \multicolumn{1}{c}{appointment soon} \\
       & \multicolumn{1}{c}{medicine last year} & \multicolumn{1}{c}{specialist last year}& \multicolumn{1}{c}{care last year}& \multicolumn{1}{c}{enough last year} \\
    \cmidrule[0.2pt](l){2-2}\cmidrule[0.2pt](l){3-3}\cmidrule[0.2pt](l){4-4}\cmidrule[0.2pt](l){5-5}
    & (1)& (2)& (3)& (4)  \\   \hline
& &   \\
\primitiveinput{altacctocareFDfinDiscR1R2R42012sq_r1.tex}
& &   \\
\hline \hline
\multicolumn{5}{p{14cm}}{\scriptsize{\textbf{Notes:} All columns report the fuzzy difference-in-discontinuities estimates using data from the Northeast, Midwest, and West regions, and compare outcomes in 2012 and 2014. The models include quadratic controls for age, fully interacted with dummies for age 65 or older and 2014. Other controls in these models include indicators for gender, race/ethnicity, education and region. Samples for the regression models only include people between the ages of 55 and 75. Standard errors (in parentheses) are clustered by quarter of age.} } \\
\end{tabular}
}
\end{center}
\end{table}


\begin{table}
\begin{center}
{
\renewcommand{\arraystretch}{0.6}
\setlength{\tabcolsep}{3pt}
\captionsetup{font={normalsize,bf}}
\caption {Access to Care by Ethnicity (with Quadratic Controls for Age)}  \label{tab:agesquared2}
\footnotesize
\centering
\begin{tabular}{lcccccc}
\hline\hline
& &   \\
&\multicolumn{4}{c}{Panel A: Baseline measures}   \\ \cmidrule[0.2pt](l){2-5}
    & \multicolumn{1}{c}{Delayed care} & \multicolumn{1}{c}{Did not get}& \multicolumn{1}{c}{Saw doctor}& \multicolumn{1}{c}{Hospital stay} \\
       & \multicolumn{1}{c}{last year} & \multicolumn{1}{c}{care last year}& \multicolumn{1}{c}{last year}& \multicolumn{1}{c}{last year} \\
    \cmidrule[0.2pt](l){2-2}\cmidrule[0.2pt](l){3-3}\cmidrule[0.2pt](l){4-4}\cmidrule[0.2pt](l){5-5}
& (1)& (2)& (3)& (4)  \\   \hline
& &   \\
\primitiveinput{acctocareFDfinDiscR1R2R42012sq_r8.tex}
& &   \\
\primitiveinput{acctocareFDfinDiscR1R2R42012sq_r9.tex}
& &   \\
\primitiveinput{acctocareFDfinDiscR1R2R42012sq_r10.tex}
& &   \\
\primitiveinput{acctocareFDfinDiscR1R2R42012sq_r11.tex}
& &   \\
\primitiveinput{acctocareFDfinDiscR1R2R42012sq_r12.tex}
& &   \\
\hline
& &   \\
&\multicolumn{4}{c}{Panel B: Alternative measures}   \\ \cmidrule[0.2pt](l){2-5}
        & \multicolumn{1}{c}{Could not} & \multicolumn{1}{c}{Could not}& \multicolumn{1}{c}{Could not}& \multicolumn{1}{c}{Could not get} \\
    & \multicolumn{1}{c}{afford prescription} & \multicolumn{1}{c}{afford to see a}& \multicolumn{1}{c}{afford follow-up}& \multicolumn{1}{c}{appointment soon} \\
       & \multicolumn{1}{c}{medicine last year} & \multicolumn{1}{c}{specialist last year}& \multicolumn{1}{c}{care last year}& \multicolumn{1}{c}{enough last year} \\
    \cmidrule[0.2pt](l){2-2}\cmidrule[0.2pt](l){3-3}\cmidrule[0.2pt](l){4-4}\cmidrule[0.2pt](l){5-5}
    & (1)& (2)& (3)& (4)  \\   \hline
& &   \\
\primitiveinput{altacctocareFDfinDiscR1R2R42012sq_r8.tex}
& &   \\
\primitiveinput{altacctocareFDfinDiscR1R2R42012sq_r9.tex}
& &   \\
\primitiveinput{altacctocareFDfinDiscR1R2R42012sq_r10.tex}
& &   \\
\primitiveinput{altacctocareFDfinDiscR1R2R42012sq_r11.tex}
& &   \\
\primitiveinput{altacctocareFDfinDiscR1R2R42012sq_r12.tex}
& &   \\
\hline \hline
\multicolumn{5}{p{15.5cm}}{\scriptsize{\textbf{Notes:} All columns report the fuzzy difference-in-discontinuities estimates using data from the Northeast, Midwest, and West regions, and compare outcomes in 2012 and 2014. The models include quadratic controls for age, fully interacted with dummies for age 65 or older and 2014. Other controls in these models include indicators for gender, race/ethnicity, education and region. Samples for the regression models only include people between the ages of 55 and 75. Standard errors (in parentheses) are clustered by quarter of age.} } \\
\end{tabular}
}
\end{center}
\end{table}


\begin{table}
\begin{center}
{
\renewcommand{\arraystretch}{0.6}
\setlength{\tabcolsep}{3pt}
\captionsetup{font={normalsize,bf}}
\caption {Access to Care by Education (with Quadratic Controls for Age)}  \label{tab:agesquared3}
\footnotesize
\centering
\begin{tabular}{lcccccc}
\hline\hline
& &   \\
&\multicolumn{4}{c}{Panel A: Baseline measures}   \\ \cmidrule[0.2pt](l){2-5}
    & \multicolumn{1}{c}{Delayed care} & \multicolumn{1}{c}{Did not get}& \multicolumn{1}{c}{Saw doctor}& \multicolumn{1}{c}{Hospital stay} \\
       & \multicolumn{1}{c}{last year} & \multicolumn{1}{c}{care last year}& \multicolumn{1}{c}{last year}& \multicolumn{1}{c}{last year} \\
    \cmidrule[0.2pt](l){2-2}\cmidrule[0.2pt](l){3-3}\cmidrule[0.2pt](l){4-4}\cmidrule[0.2pt](l){5-5}
& (1)& (2)& (3)& (4)  \\   \hline
& &   \\
\primitiveinput{acctocareFDfinDiscR1R2R42012sq_r13.tex}
& &   \\
\primitiveinput{acctocareFDfinDiscR1R2R42012sq_r14.tex}
& &   \\
\primitiveinput{acctocareFDfinDiscR1R2R42012sq_r15.tex}
& &   \\
\hline
& &   \\
&\multicolumn{4}{c}{Panel B: Alternative measures}   \\ \cmidrule[0.2pt](l){2-5}
        & \multicolumn{1}{c}{Could not} & \multicolumn{1}{c}{Could not}& \multicolumn{1}{c}{Could not}& \multicolumn{1}{c}{Could not get} \\
    & \multicolumn{1}{c}{afford prescription} & \multicolumn{1}{c}{afford to see a}& \multicolumn{1}{c}{afford follow-up}& \multicolumn{1}{c}{appointment soon} \\
       & \multicolumn{1}{c}{medicine last year} & \multicolumn{1}{c}{specialist last year}& \multicolumn{1}{c}{care last year}& \multicolumn{1}{c}{enough last year} \\
    \cmidrule[0.2pt](l){2-2}\cmidrule[0.2pt](l){3-3}\cmidrule[0.2pt](l){4-4}\cmidrule[0.2pt](l){5-5}
    & (1)& (2)& (3)& (4)  \\   \hline
& &   \\
\primitiveinput{altacctocareFDfinDiscR1R2R42012sq_r13.tex}
& &   \\
\primitiveinput{altacctocareFDfinDiscR1R2R42012sq_r14.tex}
& &   \\
\primitiveinput{altacctocareFDfinDiscR1R2R42012sq_r15.tex}
& &   \\
\hline \hline
\multicolumn{5}{p{14.5cm}}{\scriptsize{\textbf{Notes:} All columns report the fuzzy difference-in-discontinuities estimates using data from the Northeast, Midwest, and West regions, and compare outcomes in 2012 and 2014. The models include quadratic controls for age, fully interacted with dummies for age 65 or older and 2014. Other controls in these models include indicators for gender, race/ethnicity, education and region. Samples for the regression models only include people between the ages of 55 and 75. Standard errors (in parentheses) are clustered by quarter of age.} } \\
\end{tabular}
}
\end{center}
\end{table}


\begin{table}
\begin{center}
{
\renewcommand{\arraystretch}{0.6}
\setlength{\tabcolsep}{4pt}
\captionsetup{font={normalsize,bf}}
\caption {Insurance Coverage (Smaller Bandwidth)}  \label{}
\footnotesize
\centering  \begin{tabular}{lccccc}
\hline\hline
    & \multicolumn{1}{c}{On} & \multicolumn{1}{c}{Any}& \multicolumn{1}{c}{Private}& \multicolumn{1}{c}{2+ forms}& \multicolumn{1}{c}{Managed}  \\
       & \multicolumn{1}{c}{Medicare} & \multicolumn{1}{c}{insurance}& \multicolumn{1}{c}{ coverage}& \multicolumn{1}{c}{ of coverage}& \multicolumn{1}{c}{care}  \\
    \cmidrule[0.2pt](l){2-2}\cmidrule[0.2pt](l){3-3}\cmidrule[0.2pt](l){4-4}\cmidrule[0.2pt](l){5-5}\cmidrule[0.2pt](l){6-6}
& (1)& (2) & (3)& (4) & (5) \\   \hline
& &   \\
\primitiveinput{inscovDfinDiscR1R2R42012altbw1_r1.tex}
& &   \\
\hline
& &   \\
Classified by ethnicity: & &   \\    \cmidrule[0.2pt](l){1-1}
& &   \\
\primitiveinput{inscovDfinDiscR1R2R42012altbw1_r8.tex}
& &   \\
\primitiveinput{inscovDfinDiscR1R2R42012altbw1_r9.tex}
& &   \\
\primitiveinput{inscovDfinDiscR1R2R42012altbw1_r10.tex}
& &   \\
\primitiveinput{inscovDfinDiscR1R2R42012altbw1_r11.tex}
& &   \\
\primitiveinput{inscovDfinDiscR1R2R42012altbw1_r12.tex}
& &   \\
Classified by education: & &   \\    \cmidrule[0.2pt](l){1-1}
& &   \\
\primitiveinput{inscovDfinDiscR1R2R42012altbw1_r13.tex}
& &   \\
\primitiveinput{inscovDfinDiscR1R2R42012altbw1_r14.tex}
& &   \\
\primitiveinput{inscovDfinDiscR1R2R42012altbw1_r15.tex}
& &   \\
\hline         \hline
\multicolumn{6}{p{13cm}}{\scriptsize{\textbf{Notes:} All columns report the difference-in-discontinuities estimates using data from the Northeast, Midwest, and West regions, and compare outcomes in 2012 and 2014. The models include linear controls for age, fully interacted with dummies for age 65 or older and 2014.  Other controls in these models include indicators for gender, race/ethnicity, education and region. Samples for regression models only include people between the ages of 60 and 70.  Standard errors (in parentheses) are clustered by quarter of age.} }\\
\end{tabular}
}
\end{center}
\end{table}

\begin{table}
\begin{center}
{
\renewcommand{\arraystretch}{0.6}
\setlength{\tabcolsep}{4pt}
\captionsetup{font={normalsize,bf}}
\caption {Employment (Smaller Bandwidth)}  \label{Tab:empsmall}
\footnotesize
\centering  \begin{tabular}{lcc}
\hline\hline
& &   \\
    & \multicolumn{1}{c}{Employed} & \multicolumn{1}{c}{Full time}  \\
    \cmidrule[0.2pt](l){2-2}\cmidrule[0.2pt](l){3-3}
& (1)& (2) \\   \hline
& &   \\
\primitiveinput{empDfinDiscR1R2R42012altbw1_r1.tex}
& &   \\
\hline
& &   \\
Classified by ethnicity: & &   \\    \cmidrule[0.2pt](l){1-1}
\primitiveinput{empDfinDiscR1R2R42012altbw1_r8.tex}
& &   \\
\primitiveinput{empDfinDiscR1R2R42012altbw1_r9.tex}
& &   \\
\primitiveinput{empDfinDiscR1R2R42012altbw1_r10.tex}
& &   \\
\primitiveinput{empDfinDiscR1R2R42012altbw1_r11.tex}
& &   \\
\primitiveinput{empDfinDiscR1R2R42012altbw1_r12.tex}
& &   \\
Classified by education: & &   \\    \cmidrule[0.2pt](l){1-1}
\primitiveinput{empDfinDiscR1R2R42012altbw1_r13.tex}
& &   \\
\primitiveinput{empDfinDiscR1R2R42012altbw1_r14.tex}
& &   \\
\primitiveinput{empDfinDiscR1R2R42012altbw1_r15.tex}
& &   \\
\hline\hline
\multicolumn{3}{p{8cm}}{\scriptsize{\textbf{Notes:} All columns report the difference-in-discontinuities estimates using data from the Northeast, Midwest, and West regions, and compare outcomes in 2012 and 2014. The models include linear controls for age, fully interacted with dummies for age 65 or older and 2014.  Other controls in these models include indicators for gender, race/ethnicity, education and region. Samples for regression models only include people between the ages of 60 and 70.  Standard errors (in parentheses) are clustered by quarter of age.} }\\
\end{tabular}
}
\end{center}
\end{table}


\begin{table}
\begin{center}
{
\renewcommand{\arraystretch}{0.6}
\setlength{\tabcolsep}{3pt}
\captionsetup{font={normalsize,bf}}
\caption {Access to Care (Smaller Bandwidth)}  \label{tab:smallerband1}
\footnotesize
\centering
\begin{tabular}{lcccccc}
\hline\hline
& &   \\
&\multicolumn{4}{c}{Panel A: Baseline measures}   \\ \cmidrule[0.2pt](l){2-5}
    & \multicolumn{1}{c}{Delayed care} & \multicolumn{1}{c}{Did not get}& \multicolumn{1}{c}{Saw doctor}& \multicolumn{1}{c}{Hospital stay} \\
       & \multicolumn{1}{c}{last year} & \multicolumn{1}{c}{care last year}& \multicolumn{1}{c}{last year}& \multicolumn{1}{c}{last year} \\
    \cmidrule[0.2pt](l){2-2}\cmidrule[0.2pt](l){3-3}\cmidrule[0.2pt](l){4-4}\cmidrule[0.2pt](l){5-5}
& (1)& (2)& (3)& (4)  \\   \hline
& &   \\
\primitiveinput{acctocareFDfinDiscR1R2R42012altbw1_r1.tex}
& &   \\
\hline
& &   \\
&\multicolumn{4}{c}{Panel B: Alternative measures}   \\ \cmidrule[0.2pt](l){2-5}
        & \multicolumn{1}{c}{Could not} & \multicolumn{1}{c}{Could not}& \multicolumn{1}{c}{Could not}& \multicolumn{1}{c}{Could not get} \\
    & \multicolumn{1}{c}{afford prescription} & \multicolumn{1}{c}{afford to see a}& \multicolumn{1}{c}{afford follow-up}& \multicolumn{1}{c}{appointment soon} \\
       & \multicolumn{1}{c}{medicine last year} & \multicolumn{1}{c}{specialist last year}& \multicolumn{1}{c}{care last year}& \multicolumn{1}{c}{enough last year} \\
    \cmidrule[0.2pt](l){2-2}\cmidrule[0.2pt](l){3-3}\cmidrule[0.2pt](l){4-4}\cmidrule[0.2pt](l){5-5}
    & (1)& (2)& (3)& (4)  \\   \hline
& &   \\
\primitiveinput{altacctocareFDfinDiscR1R2R42012altbw1_r1.tex}
& &   \\
\hline \hline
\multicolumn{5}{p{14cm}}{\scriptsize{\textbf{Notes:} All columns report the fuzzy difference-in-discontinuities estimates using data from the Northeast, Midwest, and West regions, and compare outcomes in 2012 and 2014. The models include linear controls for age, fully interacted with dummies for age 65 or older and 2014. Other controls in these models include indicators for gender, race/ethnicity, education and region. Samples for the regression models only include people between the ages of 55 and 75. Standard errors (in parentheses) are clustered by quarter of age.} } \\
\end{tabular}
}
\end{center}
\end{table}


\begin{table}
\begin{center}
{
\renewcommand{\arraystretch}{0.6}
\setlength{\tabcolsep}{3pt}
\captionsetup{font={normalsize,bf}}
\caption {Access to Care by Ethnicity (Smaller Bandwidth)}  \label{tab:smallerband2}
\footnotesize
\centering
\begin{tabular}{lcccccc}
\hline\hline
& &   \\
&\multicolumn{4}{c}{Panel A: Baseline measures}   \\ \cmidrule[0.2pt](l){2-5}
    & \multicolumn{1}{c}{Delayed care} & \multicolumn{1}{c}{Did not get}& \multicolumn{1}{c}{Saw doctor}& \multicolumn{1}{c}{Hospital stay} \\
       & \multicolumn{1}{c}{last year} & \multicolumn{1}{c}{care last year}& \multicolumn{1}{c}{last year}& \multicolumn{1}{c}{last year} \\
    \cmidrule[0.2pt](l){2-2}\cmidrule[0.2pt](l){3-3}\cmidrule[0.2pt](l){4-4}\cmidrule[0.2pt](l){5-5}
& (1)& (2)& (3)& (4)  \\   \hline
& &   \\
\primitiveinput{acctocareFDfinDiscR1R2R42012altbw1_r8.tex}
& &   \\
\primitiveinput{acctocareFDfinDiscR1R2R42012altbw1_r9.tex}
& &   \\
\primitiveinput{acctocareFDfinDiscR1R2R42012altbw1_r10.tex}
& &   \\
\primitiveinput{acctocareFDfinDiscR1R2R42012altbw1_r11.tex}
& &   \\
\primitiveinput{acctocareFDfinDiscR1R2R42012altbw1_r12.tex}
& &   \\
\hline
& &   \\
&\multicolumn{4}{c}{Panel B: Alternative measures}   \\ \cmidrule[0.2pt](l){2-5}
        & \multicolumn{1}{c}{Could not} & \multicolumn{1}{c}{Could not}& \multicolumn{1}{c}{Could not}& \multicolumn{1}{c}{Could not get} \\
    & \multicolumn{1}{c}{afford prescription} & \multicolumn{1}{c}{afford to see a}& \multicolumn{1}{c}{afford follow-up}& \multicolumn{1}{c}{appointment soon} \\
       & \multicolumn{1}{c}{medicine last year} & \multicolumn{1}{c}{specialist last year}& \multicolumn{1}{c}{care last year}& \multicolumn{1}{c}{enough last year} \\
    \cmidrule[0.2pt](l){2-2}\cmidrule[0.2pt](l){3-3}\cmidrule[0.2pt](l){4-4}\cmidrule[0.2pt](l){5-5}
    & (1)& (2)& (3)& (4)  \\   \hline
& &   \\
\primitiveinput{altacctocareFDfinDiscR1R2R42012altbw1_r8.tex}
& &   \\
\primitiveinput{altacctocareFDfinDiscR1R2R42012altbw1_r9.tex}
& &   \\
\primitiveinput{altacctocareFDfinDiscR1R2R42012altbw1_r10.tex}
& &   \\
\primitiveinput{altacctocareFDfinDiscR1R2R42012altbw1_r11.tex}
& &   \\
\primitiveinput{altacctocareFDfinDiscR1R2R42012altbw1_r12.tex}
& &   \\
\hline \hline
\multicolumn{5}{p{15.5cm}}{\scriptsize{\textbf{Notes:} All columns report the fuzzy difference-in-discontinuities estimates using data from the Northeast, Midwest, and West regions, and compare outcomes in 2012 and 2014. The models include linear controls for age, fully interacted with dummies for age 65 or older and 2014. Other controls in these models include indicators for gender, race/ethnicity, education and region. Samples for the regression models only include people between the ages of 55 and 75. Standard errors (in parentheses) are clustered by quarter of age.} } \\
\end{tabular}
}
\end{center}
\end{table}


\begin{table}
\begin{center}
{
\renewcommand{\arraystretch}{0.6}
\setlength{\tabcolsep}{3pt}
\captionsetup{font={normalsize,bf}}
\caption {Access to Care by Education (Smaller Bandwidth)}  \label{tab:smallerband3}
\footnotesize
\centering
\begin{tabular}{lcccccc}
\hline\hline
& &   \\
&\multicolumn{4}{c}{Panel A: Baseline measures}   \\ \cmidrule[0.2pt](l){2-5}
    & \multicolumn{1}{c}{Delayed care} & \multicolumn{1}{c}{Did not get}& \multicolumn{1}{c}{Saw doctor}& \multicolumn{1}{c}{Hospital stay} \\
       & \multicolumn{1}{c}{last year} & \multicolumn{1}{c}{care last year}& \multicolumn{1}{c}{last year}& \multicolumn{1}{c}{last year} \\
    \cmidrule[0.2pt](l){2-2}\cmidrule[0.2pt](l){3-3}\cmidrule[0.2pt](l){4-4}\cmidrule[0.2pt](l){5-5}
& (1)& (2)& (3)& (4)  \\   \hline
& &   \\
\primitiveinput{acctocareFDfinDiscR1R2R42012altbw1_r13.tex}
& &   \\
\primitiveinput{acctocareFDfinDiscR1R2R42012altbw1_r14.tex}
& &   \\
\primitiveinput{acctocareFDfinDiscR1R2R42012altbw1_r15.tex}
& &   \\
\hline
& &   \\
&\multicolumn{4}{c}{Panel B: Alternative measures}   \\ \cmidrule[0.2pt](l){2-5}
        & \multicolumn{1}{c}{Could not} & \multicolumn{1}{c}{Could not}& \multicolumn{1}{c}{Could not}& \multicolumn{1}{c}{Could not get} \\
    & \multicolumn{1}{c}{afford prescription} & \multicolumn{1}{c}{afford to see a}& \multicolumn{1}{c}{afford follow-up}& \multicolumn{1}{c}{appointment soon} \\
       & \multicolumn{1}{c}{medicine last year} & \multicolumn{1}{c}{specialist last year}& \multicolumn{1}{c}{care last year}& \multicolumn{1}{c}{enough last year} \\
    \cmidrule[0.2pt](l){2-2}\cmidrule[0.2pt](l){3-3}\cmidrule[0.2pt](l){4-4}\cmidrule[0.2pt](l){5-5}
    & (1)& (2)& (3)& (4)  \\   \hline
& &   \\
\primitiveinput{altacctocareFDfinDiscR1R2R42012altbw1_r13.tex}
& &   \\
\primitiveinput{altacctocareFDfinDiscR1R2R42012altbw1_r14.tex}
& &   \\
\primitiveinput{altacctocareFDfinDiscR1R2R42012altbw1_r15.tex}
& &   \\
\hline \hline
\multicolumn{5}{p{14.5cm}}{\scriptsize{\textbf{Notes:} All columns report the fuzzy difference-in-discontinuities estimates using data from the Northeast, Midwest, and West regions, and compare outcomes in 2012 and 2014. The models include linear controls for age, fully interacted with dummies for age 65 or older and 2014. Other controls in these models include indicators for gender, race/ethnicity, education and region. Samples for the regression models only include people between the ages of 55 and 75. Standard errors (in parentheses) are clustered by quarter of age.} } \\
\end{tabular}
}
\end{center}
\end{table}


\begin{table}
\begin{center}
{
\renewcommand{\arraystretch}{0.6}
\setlength{\tabcolsep}{2pt}
\captionsetup{font={normalsize,bf}}
\caption {Insurance Coverage (Part D)}  \label{}
\footnotesize
\centering  \begin{tabular}{lccccc}
\hline\hline
& &   \\
    & \multicolumn{1}{c}{On Medicare Part D}   \\
    \cmidrule[0.2pt](l){2-2}
& (1) \\   \hline
&    \\
\primitiveinput{inscovDfinDiscR1R2R42012PARTDsq_r1.tex}
&    \\
\hline \hline
\multicolumn{2}{p{8.5cm}}{\scriptsize{\textbf{Notes:} All columns report the difference-in-discontinuities estimates using data from the Northeast, Midwest, and West regions, and compare outcomes in 2012 and 2014. The models include quadratic controls for age, fully interacted with dummies for age 65 or older and 2014. Other controls in these models include indicators for gender, race/ethnicity, education and region. Samples for the regression models only include people between the ages of 55 and 75. Standard errors (in parentheses) are clustered by quarter of age.}} \\
\end{tabular}
}
\end{center}
\end{table}


\begin{table}
\begin{center}
{
\renewcommand{\arraystretch}{0.6}
\setlength{\tabcolsep}{3pt}
\captionsetup{font={normalsize,bf}}
\caption {Access to Care (Part D)}  \label{tab:partD}
\footnotesize
\centering
\begin{tabular}{lcccccc}
\hline\hline
& &   \\
&\multicolumn{4}{c}{Panel A: Baseline measures}   \\ \cmidrule[0.2pt](l){2-5}
    & \multicolumn{1}{c}{Delayed care} & \multicolumn{1}{c}{Did not get}& \multicolumn{1}{c}{Saw doctor}& \multicolumn{1}{c}{Hospital stay} \\
       & \multicolumn{1}{c}{last year} & \multicolumn{1}{c}{care last year}& \multicolumn{1}{c}{last year}& \multicolumn{1}{c}{last year} \\
    \cmidrule[0.2pt](l){2-2}\cmidrule[0.2pt](l){3-3}\cmidrule[0.2pt](l){4-4}\cmidrule[0.2pt](l){5-5}
& (1)& (2)& (3)& (4)  \\   \hline
& &   \\
\primitiveinput{acctocareFDfinDiscR1R2R42012PARTD_r1.tex}
& &   \\
\hline
& &   \\
&\multicolumn{4}{c}{Panel B: Alternative measures}   \\ \cmidrule[0.2pt](l){2-5}
        & \multicolumn{1}{c}{Could not} & \multicolumn{1}{c}{Could not}& \multicolumn{1}{c}{Could not}& \multicolumn{1}{c}{Could not get} \\
    & \multicolumn{1}{c}{afford prescription} & \multicolumn{1}{c}{afford to see a}& \multicolumn{1}{c}{afford follow-up}& \multicolumn{1}{c}{appointment soon} \\
       & \multicolumn{1}{c}{medicine last year} & \multicolumn{1}{c}{specialist last year}& \multicolumn{1}{c}{care last year}& \multicolumn{1}{c}{enough last year} \\
    \cmidrule[0.2pt](l){2-2}\cmidrule[0.2pt](l){3-3}\cmidrule[0.2pt](l){4-4}\cmidrule[0.2pt](l){5-5}
    & (1)& (2)& (3)& (4)  \\   \hline
& &   \\
\primitiveinput{altacctocareFDfinDiscR1R2R42012PARTD_r1.tex}
& &   \\
\hline \hline
\multicolumn{5}{p{14cm}}{\scriptsize{\textbf{Notes:} All columns report the fuzzy difference-in-discontinuities estimates using data from the Northeast, Midwest, and West regions, and compare outcomes in 2012 and 2014. The models include linear controls for age, fully interacted with dummies for age 65 or older and 2014. Other controls in these models include indicators for gender, race/ethnicity, education and region. Samples for the regression models only include people between the ages of 55 and 75. Standard errors (in parentheses) are clustered by quarter of age.} } \\
\end{tabular}
}
\end{center}
\end{table}


\begin{table}
\begin{center}
{
\renewcommand{\arraystretch}{0.6}
\setlength{\tabcolsep}{3pt}
\captionsetup{font={normalsize,bf}}
\caption {Access to Care by Ethnicity (Part D)}  \label{}
\footnotesize
\centering
\begin{tabular}{lcccccc}
\hline\hline
& &   \\
&\multicolumn{4}{c}{Panel A: Baseline measures}   \\ \cmidrule[0.2pt](l){2-5}
    & \multicolumn{1}{c}{Delayed care} & \multicolumn{1}{c}{Did not get}& \multicolumn{1}{c}{Saw doctor}& \multicolumn{1}{c}{Hospital stay} \\
       & \multicolumn{1}{c}{last year} & \multicolumn{1}{c}{care last year}& \multicolumn{1}{c}{last year}& \multicolumn{1}{c}{last year} \\
    \cmidrule[0.2pt](l){2-2}\cmidrule[0.2pt](l){3-3}\cmidrule[0.2pt](l){4-4}\cmidrule[0.2pt](l){5-5}
& (1)& (2)& (3)& (4)  \\   \hline
& &   \\
\primitiveinput{acctocareFDfinDiscR1R2R42012PARTD_r8.tex}
& &   \\
\primitiveinput{acctocareFDfinDiscR1R2R42012PARTD_r9.tex}
& &   \\
\primitiveinput{acctocareFDfinDiscR1R2R42012PARTD_r10.tex}
& &   \\
\primitiveinput{acctocareFDfinDiscR1R2R42012PARTD_r11.tex}
& &   \\
\primitiveinput{acctocareFDfinDiscR1R2R42012PARTD_r12.tex}
& &   \\
\hline
& &   \\
&\multicolumn{4}{c}{Panel B: Alternative measures}   \\ \cmidrule[0.2pt](l){2-5}
        & \multicolumn{1}{c}{Could not} & \multicolumn{1}{c}{Could not}& \multicolumn{1}{c}{Could not}& \multicolumn{1}{c}{Could not get} \\
    & \multicolumn{1}{c}{afford prescription} & \multicolumn{1}{c}{afford to see a}& \multicolumn{1}{c}{afford follow-up}& \multicolumn{1}{c}{appointment soon} \\
       & \multicolumn{1}{c}{medicine last year} & \multicolumn{1}{c}{specialist last year}& \multicolumn{1}{c}{care last year}& \multicolumn{1}{c}{enough last year} \\
    \cmidrule[0.2pt](l){2-2}\cmidrule[0.2pt](l){3-3}\cmidrule[0.2pt](l){4-4}\cmidrule[0.2pt](l){5-5}
    & (1)& (2)& (3)& (4)  \\   \hline
& &   \\
\primitiveinput{altacctocareFDfinDiscR1R2R42012PARTD_r8.tex}
& &   \\
\primitiveinput{altacctocareFDfinDiscR1R2R42012PARTD_r9.tex}
& &   \\
\primitiveinput{altacctocareFDfinDiscR1R2R42012PARTD_r10.tex}
& &   \\
\primitiveinput{altacctocareFDfinDiscR1R2R42012PARTD_r11.tex}
& &   \\
\primitiveinput{altacctocareFDfinDiscR1R2R42012PARTD_r12.tex}
& &   \\
\hline \hline
\multicolumn{5}{p{15.3cm}}{\scriptsize{\textbf{Notes:} All columns report the fuzzy difference-in-discontinuities estimates using data from the Northeast, Midwest, and West regions, and compare outcomes in 2012 and 2014. The models include linear controls for age, fully interacted with dummies for age 65 or older and 2014. Other controls in these models include indicators for gender, race/ethnicity, education and region. Samples for the regression models only include people between the ages of 55 and 75. Standard errors (in parentheses) are clustered by quarter of age.} } \\
\end{tabular}
}
\end{center}
\end{table}

\clearpage



\begin{table}
\begin{center}
{
\renewcommand{\arraystretch}{0.6}
\setlength{\tabcolsep}{2pt}
\captionsetup{font={normalsize,bf}}
\caption {Insurance Coverage (Parts A, B or C)}  \label{}
\footnotesize
\centering  \begin{tabular}{lccccc}
\hline\hline
& &   \\
    & \multicolumn{1}{c}{On Medicare Part D}   \\
    \cmidrule[0.2pt](l){2-2}
& (1) \\   \hline
&    \\
\primitiveinput{inscovDfinDiscR1R2R42012NPARTDsq_r1.tex}
&    \\
\hline \hline
\multicolumn{2}{p{8.5cm}}{\scriptsize{\textbf{Notes:} All columns report the difference-in-discontinuities estimates using data from the Northeast, Midwest, and West regions, and compare outcomes in 2012 and 2014. The models include quadratic controls for age, fully interacted with dummies for age 65 or older and 2014. Other controls in these models include indicators for gender, race/ethnicity, education and region. Samples for the regression models only include people between the ages of 55 and 75. Standard errors (in parentheses) are clustered by quarter of age.}} \\
\end{tabular}
}
\end{center}
\end{table}


\begin{table}
\begin{center}
{
\renewcommand{\arraystretch}{0.6}
\setlength{\tabcolsep}{3pt}
\captionsetup{font={normalsize,bf}}
\caption {Access to Care (Parts A, B or C)}  \label{tab:partABC}
\footnotesize
\centering
\begin{tabular}{lcccccc}
\hline\hline
& &   \\
&\multicolumn{4}{c}{Panel A: Baseline measures}   \\ \cmidrule[0.2pt](l){2-5}
    & \multicolumn{1}{c}{Delayed care} & \multicolumn{1}{c}{Did not get}& \multicolumn{1}{c}{Saw doctor}& \multicolumn{1}{c}{Hospital stay} \\
       & \multicolumn{1}{c}{last year} & \multicolumn{1}{c}{care last year}& \multicolumn{1}{c}{last year}& \multicolumn{1}{c}{last year} \\
    \cmidrule[0.2pt](l){2-2}\cmidrule[0.2pt](l){3-3}\cmidrule[0.2pt](l){4-4}\cmidrule[0.2pt](l){5-5}
& (1)& (2)& (3)& (4)  \\   \hline
& &   \\
\primitiveinput{acctocareFDfinDiscR1R2R42012NPARTD_r1.tex}
& &   \\
\hline
& &   \\
&\multicolumn{4}{c}{Panel B: Alternative measures}   \\ \cmidrule[0.2pt](l){2-5}
        & \multicolumn{1}{c}{Could not} & \multicolumn{1}{c}{Could not}& \multicolumn{1}{c}{Could not}& \multicolumn{1}{c}{Could not get} \\
    & \multicolumn{1}{c}{afford prescription} & \multicolumn{1}{c}{afford to see a}& \multicolumn{1}{c}{afford follow-up}& \multicolumn{1}{c}{appointment soon} \\
       & \multicolumn{1}{c}{medicine last year} & \multicolumn{1}{c}{specialist last year}& \multicolumn{1}{c}{care last year}& \multicolumn{1}{c}{enough last year} \\
    \cmidrule[0.2pt](l){2-2}\cmidrule[0.2pt](l){3-3}\cmidrule[0.2pt](l){4-4}\cmidrule[0.2pt](l){5-5}
    & (1)& (2)& (3)& (4)  \\   \hline
& &   \\
\primitiveinput{altacctocareFDfinDiscR1R2R42012NPARTD_r1.tex}
& &   \\
\hline \hline
\multicolumn{5}{p{14cm}}{\scriptsize{\textbf{Notes:} All columns report the fuzzy difference-in-discontinuities estimates using data from the Northeast, Midwest, and West regions, and compare outcomes in 2012 and 2014. The models include linear controls for age, fully interacted with dummies for age 65 or older and 2014. Other controls in these models include indicators for gender, race/ethnicity, education and region. Samples for the regression models only include people between the ages of 55 and 75. Standard errors (in parentheses) are clustered by quarter of age.} } \\
\end{tabular}
}
\end{center}
\end{table}


\begin{table}
\begin{center}
{
\renewcommand{\arraystretch}{0.6}
\setlength{\tabcolsep}{3pt}
\captionsetup{font={normalsize,bf}}
\caption {Access to Care by Ethnicity (Parts A, B or C)}  \label{}
\footnotesize
\centering
\begin{tabular}{lcccccc}
\hline\hline
& &   \\
&\multicolumn{4}{c}{Panel A: Baseline measures}   \\ \cmidrule[0.2pt](l){2-5}
    & \multicolumn{1}{c}{Delayed care} & \multicolumn{1}{c}{Did not get}& \multicolumn{1}{c}{Saw doctor}& \multicolumn{1}{c}{Hospital stay} \\
       & \multicolumn{1}{c}{last year} & \multicolumn{1}{c}{care last year}& \multicolumn{1}{c}{last year}& \multicolumn{1}{c}{last year} \\
    \cmidrule[0.2pt](l){2-2}\cmidrule[0.2pt](l){3-3}\cmidrule[0.2pt](l){4-4}\cmidrule[0.2pt](l){5-5}
& (1)& (2)& (3)& (4)  \\   \hline
& &   \\
\primitiveinput{acctocareFDfinDiscR1R2R42012NPARTD_r8.tex}
& &   \\
\primitiveinput{acctocareFDfinDiscR1R2R42012NPARTD_r9.tex}
& &   \\
\primitiveinput{acctocareFDfinDiscR1R2R42012NPARTD_r10.tex}
& &   \\
\primitiveinput{acctocareFDfinDiscR1R2R42012NPARTD_r11.tex}
& &   \\
\primitiveinput{acctocareFDfinDiscR1R2R42012NPARTD_r12.tex}
& &   \\
\hline
& &   \\
&\multicolumn{4}{c}{Panel B: Alternative measures}   \\ \cmidrule[0.2pt](l){2-5}
        & \multicolumn{1}{c}{Could not} & \multicolumn{1}{c}{Could not}& \multicolumn{1}{c}{Could not}& \multicolumn{1}{c}{Could not get} \\
    & \multicolumn{1}{c}{afford prescription} & \multicolumn{1}{c}{afford to see a}& \multicolumn{1}{c}{afford follow-up}& \multicolumn{1}{c}{appointment soon} \\
       & \multicolumn{1}{c}{medicine last year} & \multicolumn{1}{c}{specialist last year}& \multicolumn{1}{c}{care last year}& \multicolumn{1}{c}{enough last year} \\
    \cmidrule[0.2pt](l){2-2}\cmidrule[0.2pt](l){3-3}\cmidrule[0.2pt](l){4-4}\cmidrule[0.2pt](l){5-5}
    & (1)& (2)& (3)& (4)  \\   \hline
& &   \\
\primitiveinput{altacctocareFDfinDiscR1R2R42012NPARTD_r8.tex}
& &   \\
\primitiveinput{altacctocareFDfinDiscR1R2R42012NPARTD_r9.tex}
& &   \\
\primitiveinput{altacctocareFDfinDiscR1R2R42012NPARTD_r10.tex}
& &   \\
\primitiveinput{altacctocareFDfinDiscR1R2R42012NPARTD_r11.tex}
& &   \\
\primitiveinput{altacctocareFDfinDiscR1R2R42012NPARTD_r12.tex}
& &   \\
\hline \hline
\multicolumn{5}{p{15.3cm}}{\scriptsize{\textbf{Notes:} All columns report the fuzzy difference-in-discontinuities estimates using data from the Northeast, Midwest, and West regions, and compare outcomes in 2012 and 2014. The models include linear controls for age, fully interacted with dummies for age 65 or older and 2014. Other controls in these models include indicators for gender, race/ethnicity, education and region. Samples for the regression models only include people between the ages of 55 and 75. Standard errors (in parentheses) are clustered by quarter of age.} } \\
\end{tabular}
}
\end{center}
\end{table}

\clearpage



\begin{table}
\begin{center}
{
\renewcommand{\arraystretch}{0.6}
\setlength{\tabcolsep}{3pt}
\captionsetup{font={normalsize,bf}}
\caption {Access to Care in Midwest and South Regions (Placebo Test)}  \label{placebo1}
\footnotesize
\centering
\begin{tabular}{lcccccc}
\hline\hline
& &   \\
&\multicolumn{4}{c}{Panel A: Baseline measures}   \\ \cmidrule[0.2pt](l){2-5}
    & \multicolumn{1}{c}{Delayed care} & \multicolumn{1}{c}{Did not get}& \multicolumn{1}{c}{Saw doctor}& \multicolumn{1}{c}{Hospital stay} \\
       & \multicolumn{1}{c}{last year} & \multicolumn{1}{c}{care last year}& \multicolumn{1}{c}{last year}& \multicolumn{1}{c}{last year} \\
    \cmidrule[0.2pt](l){2-2}\cmidrule[0.2pt](l){3-3}\cmidrule[0.2pt](l){4-4}\cmidrule[0.2pt](l){5-5}
& (1)& (2)& (3)& (4)  \\   \hline
& &   \\
\primitiveinput{acctocareFDfinDiscR2R32012_r1.tex}
& &   \\
\hline
& &   \\
&\multicolumn{4}{c}{Panel B: Alternative measures}   \\ \cmidrule[0.2pt](l){2-5}
        & \multicolumn{1}{c}{Could not} & \multicolumn{1}{c}{Could not}& \multicolumn{1}{c}{Could not}& \multicolumn{1}{c}{Could not get} \\
    & \multicolumn{1}{c}{afford prescription} & \multicolumn{1}{c}{afford to see a}& \multicolumn{1}{c}{afford follow-up}& \multicolumn{1}{c}{appointment soon} \\
       & \multicolumn{1}{c}{medicine last year} & \multicolumn{1}{c}{specialist last year}& \multicolumn{1}{c}{care last year}& \multicolumn{1}{c}{enough last year} \\
    \cmidrule[0.2pt](l){2-2}\cmidrule[0.2pt](l){3-3}\cmidrule[0.2pt](l){4-4}\cmidrule[0.2pt](l){5-5}
    & (1)& (2)& (3)& (4)  \\   \hline
& &   \\
\primitiveinput{altacctocareFDfinDiscR2R32012_r1.tex}
& &   \\
\hline \hline
\multicolumn{5}{p{14cm}}{\scriptsize{\textbf{Notes:} All columns report the fuzzy difference-in-discontinuities estimates using data from the Midwest and South regions, and compare outcomes in 2012 and 2014. The models include linear controls for age, fully interacted with dummies for age 65 or older and 2014. Other controls in these models include indicators for gender, race/ethnicity, education and region. Samples for the regression models only include people between the ages of 55 and 75. Standard errors (in parentheses) are clustered by quarter of age.} } \\
\end{tabular}
}
\end{center}
\end{table}

\begin{table}
\begin{center}
{
\renewcommand{\arraystretch}{0.5}
\setlength{\tabcolsep}{3pt}
\captionsetup{font={normalsize,bf}}
\caption {Access to Care between 2009 and 2013 (Placebo Test)}  \label{placebo3}
\footnotesize
\centering
\begin{tabular}{lcccccc}
\hline\hline
& &   \\
&\multicolumn{4}{c}{Panel A: Baseline measures}   \\ \cmidrule[0.2pt](l){2-5}
    & \multicolumn{1}{c}{Delayed care} & \multicolumn{1}{c}{Did not get}& \multicolumn{1}{c}{Saw doctor}& \multicolumn{1}{c}{Hospital stay} \\
       & \multicolumn{1}{c}{last year} & \multicolumn{1}{c}{care last year}& \multicolumn{1}{c}{last year}& \multicolumn{1}{c}{last year} \\
    \cmidrule[0.2pt](l){2-2}\cmidrule[0.2pt](l){3-3}\cmidrule[0.2pt](l){4-4}\cmidrule[0.2pt](l){5-5}
& (1)& (2)& (3)& (4)  \\   \hline
& &   \\
\multicolumn{4}{l}{\underline{2012 vs. 2013:}} \\
& &   \\
\primitiveinput{acctocareFDfinDiscR1R2R420122013plac_r1.tex}
& &   \\
\multicolumn{4}{l}{\underline{2011 vs. 2012:}} \\
& &   \\
\primitiveinput{acctocareFDfinDiscR1R2R420112012plac_r1.tex}
& &   \\
\multicolumn{4}{l}{\underline{2010 vs. 2011:}} \\
& &   \\
\primitiveinput{acctocareFDfinDiscR1R2R420102011plac_r1.tex}
& &   \\
\multicolumn{4}{l}{\underline{2009 vs. 2010:}} \\
& &   \\
\primitiveinput{acctocareFDfinDiscR1R2R420092010plac_r1.tex}
& &   \\
\hline
& &   \\
&\multicolumn{4}{c}{Panel B: Alternative measures}   \\ \cmidrule[0.2pt](l){2-5}
        & \multicolumn{1}{c}{Could not} & \multicolumn{1}{c}{Could not}& \multicolumn{1}{c}{Could not}& \multicolumn{1}{c}{Could not get} \\
    & \multicolumn{1}{c}{afford prescription} & \multicolumn{1}{c}{afford to see a}& \multicolumn{1}{c}{afford follow-up}& \multicolumn{1}{c}{appointment soon} \\
       & \multicolumn{1}{c}{medicine last year} & \multicolumn{1}{c}{specialist last year}& \multicolumn{1}{c}{care last year}& \multicolumn{1}{c}{enough last year} \\
    \cmidrule[0.2pt](l){2-2}\cmidrule[0.2pt](l){3-3}\cmidrule[0.2pt](l){4-4}\cmidrule[0.2pt](l){5-5}
    & (1)& (2)& (3)& (4)  \\   \hline
 & &   \\
\multicolumn{4}{l}{\underline{2012 vs. 2013:}} \\
& &   \\
\primitiveinput{altacctocareFDfinDiscR1R2R420122013plac_r1.tex}
& &   \\
\multicolumn{4}{l}{\underline{2011 vs. 2012:}} \\
& &   \\
\primitiveinput{altacctocareFDfinDiscR1R2R420112012plac_r1.tex}
& &   \\
\multicolumn{4}{l}{\underline{2010 vs. 2011:}} \\
& &   \\
\primitiveinput{altacctocareFDfinDiscR1R2R420102011plac_r1.tex}
    & &   \\
\multicolumn{4}{l}{\underline{2009 vs. 2010:}} \\
& &   \\
\primitiveinput{altacctocareFDfinDiscR1R2R420092010plac_r1.tex}
& &   \\
\hline \hline
\multicolumn{5}{p{14cm}}{\scriptsize{\textbf{Notes:} All columns report the fuzzy difference-in-discontinuities estimates using data from the Northeast, Midwest, and West regions, and compare outcomes between 2009 and 2010, 2010 and 2011, 2011 and 2012, and 2012 and 2013. The models include linear controls for age, fully interacted with dummies for age 65 or older and 2014. Other controls in these models include indicators for gender, race/ethnicity, education and region. Samples for the regression models only include people between the ages of 55 and 75. Standard errors (in parentheses) are clustered by quarter of age.} } \\
\end{tabular}
}
\end{center}
\end{table}




\begin{table}
\begin{center}
{
\renewcommand{\arraystretch}{0.6}
\setlength{\tabcolsep}{3pt}
\captionsetup{font={normalsize,bf}}
\caption {Access to Care (Excluding Individuals who Turn 65 in the First Half of 2014)}  \label{access_no65}
\footnotesize
\centering
\begin{tabular}{lcccccc}
\hline\hline
& &   \\
&\multicolumn{4}{c}{Panel A: Baseline measures}   \\ \cmidrule[0.2pt](l){2-5}
    & \multicolumn{1}{c}{Delayed care} & \multicolumn{1}{c}{Did not get}& \multicolumn{1}{c}{Saw doctor}& \multicolumn{1}{c}{Hospital stay} \\
       & \multicolumn{1}{c}{last year} & \multicolumn{1}{c}{care last year}& \multicolumn{1}{c}{last year}& \multicolumn{1}{c}{last year} \\
    \cmidrule[0.2pt](l){2-2}\cmidrule[0.2pt](l){3-3}\cmidrule[0.2pt](l){4-4}\cmidrule[0.2pt](l){5-5}
& (1)& (2)& (3)& (4)  \\   \hline
& &   \\
\primitiveinput{acctocareFDfinDiscR1R2R42012_r1no65.tex}
& &   \\
\hline
& &   \\
&\multicolumn{4}{c}{Panel B: Alternative measures}   \\ \cmidrule[0.2pt](l){2-5}
        & \multicolumn{1}{c}{Could not} & \multicolumn{1}{c}{Could not}& \multicolumn{1}{c}{Could not}& \multicolumn{1}{c}{Could not get} \\
    & \multicolumn{1}{c}{afford prescription} & \multicolumn{1}{c}{afford to see a}& \multicolumn{1}{c}{afford follow-up}& \multicolumn{1}{c}{appointment soon} \\
       & \multicolumn{1}{c}{medicine last year} & \multicolumn{1}{c}{specialist last year}& \multicolumn{1}{c}{care last year}& \multicolumn{1}{c}{enough last year} \\
    \cmidrule[0.2pt](l){2-2}\cmidrule[0.2pt](l){3-3}\cmidrule[0.2pt](l){4-4}\cmidrule[0.2pt](l){5-5}
    & (1)& (2)& (3)& (4)  \\   \hline
& &   \\
\primitiveinput{altacctocareFDfinDiscR1R2R42012_r1no65.tex}
& &   \\
\hline \hline
\multicolumn{5}{p{14cm}}{\scriptsize{\textbf{Notes:} All columns report the fuzzy difference-in-discontinuities estimates using data from the Northeast, Midwest, and West regions, and compare outcomes in 2012 and 2014. The models include linear controls for age, fully interacted with dummies for age 65 or older and 2014. Other controls in these models include indicators for gender, race/ethnicity, education and region. Samples for regression models only include people between the ages of 55 and 75, excluding individuals who turned 65 in the first half of 2014.  Standard errors (in parentheses) are clustered by quarter of age.} } \\
\end{tabular}
}
\end{center}
\end{table}

\end{document}